\documentclass[10pt,aps,prd,onecolumn,eqsecnum,amsmath,amssymb,nofootinbib,superscriptaddress]{revtex4-2}

\usepackage{mathtools}			
\usepackage{amsthm}
\usepackage{float}
\restylefloat{table}
\usepackage[usenames]{color}
\usepackage{xcolor}
\usepackage{enumitem}

\usepackage[colorlinks=true,
            citecolor=red,
            linkcolor=blue,
            urlcolor=violet,
            filecolor=cyan,
            backref=false]{hyperref}

\setlistdepth{9}
\renewlist{itemize}{itemize}{9}

\usepackage{pifont}

\setlist[itemize,1]{label=\textbullet}
\setlist[itemize,2]{label=\textasteriskcentered}
\setlist[itemize,3]{label=--}
\setlist[itemize,4]{label=\textbullet}
\setlist[itemize,5]{label=-}
\setlist[itemize,6]{label=\textbullet}
\setlist[itemize,7]{label=-}
\setlist[itemize,8]{label=\textbullet}
\setlist[itemize,9]{label=-}

\newcommand{\dd}{{\rm{d}}} 
\newcommand{\bl}{\mbox{\boldmath$\ell$}}

\newcommand{\rovno}{& = &}

\def \rr {\rho}

\def \rn {R_0}
\def \BE {\begin{equation}}
\def \EE {\end{equation}}
\def \BEA { \begin{eqnarray}}
\def \EEA {\end{eqnarray}}
\def \bea { \begin{eqnarray}}
\def \eea {\end{eqnarray}}
\def \be {\begin{equation}}
\def \ee {\end{equation}}

\def \pul {\textstyle{\frac{1}{2}}}
\def \ctvrt {\textstyle{\frac{1}{4}}}
\def \tre {\textstyle{\frac{1}{3}}}
\def \sest {\textstyle{\frac{1}{6}}}
\def \H {\mathcal{H}}

\def \BoxK {\triangle}

\newtheorem{theorem}{Theorem}[section] 
 
\newtheorem{lemma}[theorem]{Lemma}

\begin{document}

\title{Static spherically symmetric Kundt vacuum solutions of higher-derivative gravities}

\author{Breno L. Giacchini}
\email{breno.giacchini@matfyz.cuni.cz}
\affiliation{Institute of Theoretical Physics, Faculty of Mathematics and Physics, Charles University, V Hole\v{s}ovi\v{c}k\'ach 2, Prague 180 00, Czech Republic}

\author{Ivan Kol\'a\v{r}}
\email{ivan.kolar@matfyz.cuni.cz}
\affiliation{Institute of Theoretical Physics, Faculty of Mathematics and Physics, Charles University, V Hole\v{s}ovi\v{c}k\'ach 2, Prague 180 00, Czech Republic}

\author{Vojt\v{e}ch Pravda}
\email{pravda@math.cas.cz}
\affiliation{Institute of Mathematics of the Czech Academy of Sciences, \v{Z}itn\'a 25, 115 67 Prague 1, Czech Republic}

\author{Alena Pravdov\'a}
\email{pravdova@math.cas.cz}
\affiliation{Institute of Mathematics of the Czech Academy of Sciences, \v{Z}itn\'a 25, 115 67 Prague 1, Czech Republic}

\date{\today}

\begin{abstract}

We study static spherically symmetric Kundt solutions to the vacuum field equations of quadratic gravity with a cosmological constant, as well as specific models of six-derivative gravity. In quadratic gravity, we identify all solutions for coupling constants satisfying ${\alpha\neq3\beta}$, while the case ${\alpha=3\beta}$ is studied using the Frobenius method, where we derive the recurrence relations for the power series. In contrast, in six-derivative gravity, we focus on selected models to illustrate the variety of closed-form solutions; we also analyze possible indicial families of Frobenius solutions. For all solutions, we analyze curvature singularities and their accessibility to geodesic observers. 

We then construct exact gravitational-wave solutions propagating on some of these backgrounds in quadratic and six-derivative gravity. It is known that in Einstein gravity, gravitational waves on the Nariai background unavoidably contain singularities, which are interpreted as physical sources generating these gravitational waves. In contrast, in addition to singular solutions,  for appropriate values of the coupling constants, higher-order gravities allow for globally smooth solutions representing gravitational waves.

\end{abstract}

\maketitle

\tableofcontents

\section{Introduction}
		
Birkhoff’s theorem is a classical result in general relativity. It states that any spherically symmetric vacuum solution to Einstein's equations
is locally isometric to a region in the Schwarzschild spacetime. It can be generalized to the case with a non-zero cosmological constant $\Lambda$. However, for $\Lambda>0$, two branches of solutions appear (see \cite{Schleich2010,Morrow-Jones93}) - apart from Schwarzschild-de Sitter, there is also another static, spherically symmetric vacuum solution: Nariai spacetime $dS_2 \times S^2$, belonging to the Kundt\footnote{Kundt spacetimes admit a null geodesic, non-expanding, shearfree, and twistfree congruence.} class of spacetimes.

A natural question is to what extent 	Birkhoff’s theorem and the uniqueness of static spherically symmetric vacuum spacetimes generalize to more general metric theories of gravity governed by the Lagrangian
\be
{ L}={ L}(g_{ab},R_{abcd},\nabla_{a_1}R_{bcde},\dots,\nabla_{a_1\dots a_p}R_{bcde}), \label{Lagr}
\ee
where $L$ is a polynomial curvature invariant.

Quadratic gravity represents an important  
theory within the class of theories \eqref{Lagr} in which all vacuum solutions of general relativity (from now on, possibly including the cosmological constant $\Lambda$) are also vacuum solutions. 
Thus, the Schwarzschild-(A)dS and Nariai spacetimes are also vacuum solutions of quadratic gravity. However, the uniqueness of the Schwarzschild solution in quadratic gravity is lost. 
Remarkably, it has been shown that in quadratic gravity, another static, spherically symmetric, asymptotically flat black hole exists \cite{Luetal15}.
Furthermore, \cite{PraPraPodSva21} found that additional static, spherically symmetric Kundt spacetimes appear as vacuum solutions of quadratic gravity besides the Nariai spacetime. Thus,  the uniqueness of the Nariai spacetime within static, spherically symmetric Kundt spacetimes is also lost in quadratic gravity.

In the generic case, the Schwarzschild metric does not appear as a vacuum solution to higher-order gravity theories \eqref{Lagr} beyond quadratic gravity. This is in contrast to the special properties of the Nariai spacetime. 
The Nariai spacetime belongs \cite{Herviketal15,HerPraPra17} to the class of universal spacetimes \cite{HerPraPra14}  and thus will appear as a vacuum solution of any theory \eqref{Lagr}. For universal spacetimes, the field equations of higher-order gravities \eqref{Lagr} reduce to a single algebraic condition for $\Lambda$.
In particular, for the Nariai spacetime, solving the field equations of higher-order gravity \eqref{Lagr} amounts to solving an algebraic constraint relating the radius $\rn$ of the $S^2$ sphere appearing in the Nariai metric, the cosmological constant $\Lambda$ and other parameters of the model.

For example, in  the presence of cubic curvature corrections in \eqref{Lagr}, $\Lambda$ is shifted from its Einstein (and quadratic gravity) value of $1/\rn^2$ to
\be
\Lambda= \frac{1}{\rn^2}-\frac{8 (18 \eta_3+6 \eta_7+\eta_8)}{9 \gamma  \rn^6},
\ee
where the quantities $\eta_i$ and $\gamma$ in the last term correspond to coupling constants (see \eqref{action-6der} below). Remarkably, in this case, the Nariai spacetime can be a vacuum solution even if $\Lambda=0$~\cite{Giacchini:2025gzw}.

Besides the Nariai spacetime, which appears in all theories of the form \eqref{Lagr}, there may be additional static spherically symmetric Kundt spacetimes solving the vacuum field equations of a particular higher-order gravity.\footnote{In addition to the above-mentioned examples of such Kundt spacetimes in quadratic gravity \cite{PraPraPodSva21}, a six-derivative gravity example of such a Kundt spacetime is given in \cite{Giacchini:2025gzw}.} This will obviously depend on the particular choice of the theory, and possibly also on the values of the coupling constants and the cosmological constant.

In this paper, we systematically study static, spherically symmetric Kundt spacetimes in quadratic gravity. We also identify several such closed-form Kundt metrics as vacua of various six-derivative gravities. However, the complexity of six-derivative gravity does not allow for a systematic study of these solutions.

In addition, after identifying static and spherically symmetric Kundt backgrounds in quadratic and six-derivative gravity, we study gravitational waves propagating on these backgrounds. A well-known example in the context of general relativity is the Khlebnikov-Ghanam-Thompson spacetime \cite{Khlebnikov86}. It represents gravitational waves propagating on the Nariai background. This spacetime is conjectured to be universal \cite{HerPraPra14} (see also \cite{HerPraPra17}), and indeed, it naturally also solves the vacuum equations of quadratic and six-derivative gravity.

In general relativity, vacuum gravitational waves on the Nariai background unavoidably contain singularities. These singularities are interpreted as physical point-like sources or topological defects generating these gravitational waves \cite{Ortaggio02}.
In contrast, we show that in addition to singular solutions, higher-order gravities allow for globally smooth solutions representing gravitational waves for appropriate values of the coupling constants.

Naturally, we also discuss gravitational waves propagating on other static, spherically symmetric Kundt backgrounds that appear in addition to the Nariai spacetime in quadratic and six-derivative gravity theories.

\hspace{3mm}

The paper is structured as follows: In Sec.~\ref{sc:Kundtmetric}, we introduce the metric ansatz for a static, spherically symmetric Kundt spacetime and briefly discuss spacetimes within this class that are constructed as the direct product of two constant-curvature spaces. We also examine  the main conditions related to the divergence and regularity of scalar curvature invariants  and the timelike and null-geodesic completeness.

In Sec.~\ref{sc:QG}, we substitute this ansatz into the field equations of quadratic gravity  and find all solutions for coupling constants satisfying ${\alpha\neq3\beta}$; in the case ${\alpha=3\beta}$, besides constructing particular closed-form solutions, we analyze the Frobenius series solutions and derive the recurrence relations for their coefficients.

Sec.~\ref{sc:SDG} is devoted to six-derivative gravity, where we focus on selected models admitting closed-form solutions. In addition,  we  identify the indicial families of the Frobenius solutions. Some of these solutions to higher-derivative theories are then used in Sec.~\ref{sc:gravwaves} as backgrounds to construct exact gravitational-wave solutions.

The paper concludes with a summary of our results in Sec.~\ref{sc:summary}. Appendix~\ref{sc:Pboxeq} contains supplemental material on solving the polynomial Laplacian, Poisson, and heat equations on the 2-sphere, Appendix~\ref{sec_con} briefly discusses the connection between exact and power-series solutions in six-derivative gravity, and Appendix~\ref{AppC} establishes the triviality of the standard static, spherically symmetric Kundt sector (showing that only maximally symmetric spacetimes arise).

\section{Static spherically symmetric Kundt metric }
\label{sc:Kundtmetric}

Kundt metrics are metrics admitting a twist-free, shear-free, and non-expanding null geodesic congruence. The existence of such a congruence leads to a substantial simplification of Einstein's equations, and Kundt spacetimes have been thoroughly studied in the context of general relativity (see \cite{Stephanibook,GriPodbook}). In addition, various Kundt spacetimes have also been studied in the context of quadratic gravity and other modified theories of gravity.

In this work, we focus on Kundt spacetimes that are static and possess spherical symmetry. The usual form of a static, spherically symmetric metric reads
\be
	\dd s^2=-f(r) \dd t^2+\frac{1}{f(r)} \dd  r^2+ r^2
    ( \dd \theta^2+\sin^2\theta\dd\phi^2). \label{SSexpandingmetric}
	\ee
    This is a Petrov type D (or O) metric in which both principal null directions (PNDs) are expanding. In Kundt spacetimes, the Kundt null congruence is always a (non-expanding) PND. Thus, Petrov type D Kundt spacetimes are incompatible with the metric \eqref{SSexpandingmetric}.  For type Petrov O, it can be shown  (see Appendix \ref{AppC}) that the only Kundt spacetimes compatible with \eqref{SSexpandingmetric}  are (A)dS and Minkowski spacetimes.

   Thus metric \eqref{SSexpandingmetric}  is not suitable for studying static spherically symmetric Kundt spacetimes.

In fact,  static and spherically symmetric Kundt metrics (apart from (A)dS and Minkowski spacetimes) belong to a lesser-known second class of static, spherically symmetric metrics, derived, e.g., in \cite{Guilabert24}, 
	\be
	\dd s^2=-f(r) \dd t^2+\frac{1}{f(r)} \dd  r^2+ \rn^2
    ( \dd \theta^2+\sin^2\theta\dd\phi^2). \label{SSKundtmetric}
	\ee
For this  Petrov type D (or O) metric, the principal null congruence is a Kundt congruence.
    The transformation of this metric to Kundt coordinates is given in Sec. \ref{sc:gravwaves}. 

Note that the form of the metric \eqref{SSKundtmetric} is preserved under the transformation
\bea
\bar t &= & \mu t,\label{tr_t}\\
\bar r &=  & \frac{r-\nu}{\mu}\label{tr_r},
\eea
with
\be
\bar f = \mu^{-2} f.
\ee
One can use \eqref{tr_t}, \eqref{tr_r} to reduce the number of free parameters in a given solution by at most two.

Metric \eqref{SSKundtmetric}  is a direct product metric of two two-dimensional metrics, the 2-dimensional Lorentzian spacetime \newline $\dd {s_{(I)}}^2 = -f(r) \dd t^2+\frac{1}{f(r)} \dd r^2$ and the homogeneous metric on the 2-sphere, with the Ricci scalars
\be
R_{(I)} = -f'',  \ \ \ \   R_{(II)} = \frac{2}{\rn^2},\label{curv-2spaces}
\ee
respectively.
Thus, the Ricci scalar of the resulting four-dimensional spacetime reads
\BE
    R = \frac{2}{{\rn}^2} - f''. \label{eqricci}
\EE
It is also useful to express the boost-weight zero component of the traceless Ricci tensor $S_{01} \equiv S_{ab} \ell^a n^b$ in the null frame\footnote{The null frame reads $\ell=\frac{1}{\sqrt{2}} (\sqrt{f} \dd t + \frac{1}{\sqrt{f}} \dd r )$, \ $n=\frac{1}{\sqrt{2}} (\sqrt{f} \dd t - \frac{1}{\sqrt{f}} \dd r )$,\  $ m^{(2)}= \rn \dd \theta $, \ $m^{(3)}= \rn \sin \theta \dd \phi.$  }
\BE
S_{01} = \frac{1}{4} \left( \frac{2}{{\rn}^2} + f''  \right) .
\EE

For the case with $f^{(3)}=0$, the metric \eqref{SSKundtmetric} becomes a direct product of two spaces of constant curvature. Such spacetimes have been studied in the context of general relativity (see \cite{Stephanibook,GriPodbook} and references therein) and include the Nariai (vacuum), Bertotti-Robinson (electrovacuum), and Pleba\'nski-Hacyan (electrovacuum) spacetimes. 

In quadratic gravity, some of the above electrovacuum spacetimes become vacuum spacetimes. Furthermore, in some cases, we will also obtain quadratic-gravity generalizations of these spacetimes.

To get a preliminary overview of the $f^{(3)}=0$ cases in quadratic gravity, let us include a summary in Table \ref{tableNariaietal}.  Note, however, that quadratic gravity also includes $f^{(3)} \neq 0$ solutions that are not covered by the table. The derivation of the results in this table can be found in Sec. \ref{sc:QG}.

\vspace{2mm}
\begin{small}
\begin{table}[h]
		\begin{center}
			\begin{tabular}{|c|c|c|c|c|c|}
				\hline
$R_{(I)}$ & geometry& constraints  & $R$ & $  S_{01}$ & spacetime \\[1mm] \hline\hline
$+$ & $dS_2\times S^2$ & & $+$  & $\neq 0$& Bachian-Nariai\\
$+$ & $dS_2\times S^2$ & $R_{(I)}=R_{(II)}$ & $+$ & $0$ & Nariai\\[1mm] \hline
$0$ & $M_2\times S^2$  & & $+$ & $+$ & Pleba\'nski-Hacyan\\[1mm] \hline 
$-$ & $AdS_2\times S^2$ & $|R_{(I)}|<R_{(II)}$
& $+$  & $+$ & Bachian-Bertotti-Robinson\\
$-$ & $AdS_2\times S^2$ & $|R_{(I)}|>R_{(II)}$ & $-$  & $+$ & Bachian-Bertotti-Robinson\\ 
$-$ & $AdS_2\times S^2$& $R_{(I)}=-R_{(II)}$ & $0$ & $+$ & Bertotti-Robinson\\
\hline
\end{tabular}\\[2mm]
\end{center}
	\caption{Spacetimes occurring as a direct product of two constant-curvature spaces compatible with the metric \eqref{SSKundtmetric} in general relativity and quadratic gravity. The first column contains the sign of Ricci scalar of the first block, $R_{(I)}$. Also note that due to \eqref{eqInvars}, $R=0$ implies vanishing of $C_{abcd} C^{abcd}$. This is consistent with   Bertotti-Robinson being of Petrov type O and remaining spacetimes of Petrov type D. }
\label{tableNariaietal}
\end{table}
\end{small}


	 For metric \eqref{SSKundtmetric}, the Einstein equations with cosmological constant $\Lambda$ reduce to
    \be
   f'' + 2 \Lambda = 0, \ \ \ \Lambda=\frac{1}{\rn^2},
    \ee
    which gives the  Nariai spacetime
    \be
	 f( r)=	- \Lambda   r^2  + b  r + c,\ \ \Lambda=\frac{1}{\rn^2}, \label{Nariai}
	\ee
    the only vacuum spacetime in general relativity within the class of metrics \eqref{SSKundtmetric}. As indicated in Table \ref{tableNariaietal}, quadratic gravity allows for more vacuum solutions of the form \eqref{SSKundtmetric}.

Following \cite{Frolov:2021afd}, one actually can deduce that any polynomial scalar curvature invariant  of   metric \eqref{SSKundtmetric} is given as a polynomial of the two invariants (essentially the Ricci scalars of the two two-dimensional metrics):
\begin{equation}
    p=\frac{1}{{\rn}^2}, \quad v=-\frac{f''}{2}.
\end{equation}
For example,
\begin{equation}
\begin{aligned}
    R&=2(p+v)&&=\frac{2}{{\rn}^2} - f'',
    \\ 
    R_{ab}R^{ab}&=(p-v)^2+(p+v)^2&&=\frac{2}{\rn^2}+\frac{f''}{2},
    \\
    R_{abcd}R^{abcd}&=4(p^2+v^2)&&=\frac{4}{\rn^2}+f''^2, 
    \\
    C_{abcd}C^{abcd}&=\frac{4}{3}(p+v)^2&&=\frac{(-2+\rn^2 f'')^2}{3\rn^4}.
    \label{eqInvars}
\end{aligned}
\end{equation}
As a result, the regularity of any polynomial curvature invariant is equivalent to the regularity of $f''$.

Consider general geodesics, which can always be taken to lie in the equatorial plane ${\theta = \pi/2}$ without loss of generality, with the tangent vector $k^\mu = (\dot{t}, \dot{r}, 0, \dot{\phi})$. Putting together the normalization condition, $k^{\mu}k_{\mu}=-f\dot{t}^2+\dot{r}^2/f+\rn^2\dot{\phi}^2=\epsilon$, ${\epsilon=\pm1,0}$, the energy conservation, ${-k_{\mu}(\partial/\partial t)^{\mu}=f\dot{t}=E}$, and angular momentum conservation, ${k_{\mu}(\partial/\partial \phi)^{\mu}=\rn^2\dot{\phi}=L}$, we obtain ${\dot{r}=\pm \sqrt{E^2+f(r)(\epsilon -L^2/\rn^2)} }$. This, upon inverting and integrating for the affine parameter, leads to 
\begin{equation}\label{eq:affpar}
    \lambda-\lambda_0=\pm\int^{r}_{{r_0}}\frac{dr'}{\sqrt{E^2+f(r')\left(\epsilon -\frac{L^2}{\rn^2}\right)}}.
\end{equation}
Clearly, $r$ is directly the affine parameter of null radial geodesics (${\epsilon=0}$, ${L=0}$) irrespective of $f$. The convergence of the integral in \eqref{eq:affpar} can be used to test whether a curvature singularity, if present, is actually reached by timelike ${(\epsilon=-1)}$ or null ${(\epsilon=0)}$ geodesics, or if it is essentially harmless.

\section{Quadratic gravity }
\label{sc:QG}

	In this section, we identify all spacetimes of the form \eqref{SSKundtmetric} beyond the Nariai spacetime—the only vacuum general relativity solution compatible with \eqref{SSKundtmetric}—that solve the vacuum field equations of quadratic gravity with coupling constants satisfying $\alpha\neq 3\beta$. Also, for the particular case $\alpha=3\beta$, we obtain solutions expandable as Frobenius series. Some of these spacetimes are already listed in Table \ref{tableNariaietal} to highlight their connection to known  (vacuum and electrovacuum) solutions in general relativity. 

    \vspace{10mm}

	\subsection{Field equations}
	The action of quadratic gravity reads
	\be
	S = \int \dd^4 x\, \sqrt{-g}\, \Big(
	\gamma \,(R-2\Lambda) +\beta\,R^2  - \alpha\, C_{abcd}\, C^{abcd}	\Big)\,,
	\label{action}
	\ee
	with ${\gamma=1/G}$ ($G$ is the Newtonian constant), $\Lambda $ is the cosmological constant, and  $\alpha$, $\beta$ are coupling constants of quadratic gravity. 
    
    Corresponding  field equations of quadratic gravity read
	\be
	\gamma \left(R_{ab} - {\pul} R\, g_{ab}+\Lambda\,g_{ab}\right)-4 \alpha\,B_{ab}
	+2\beta\left(R_{ab}-\tfrac{1}{4}R\, g_{ab}+ g_{ab}\, \Box - \nabla_b \nabla_a\right) R = 0 \,, \label{EqQG4d}
	\ee
	where $B_{ab}$ is the {Bach tensor}
	\BE
	B_{ab} \equiv \big( \nabla^c \nabla^d + {\pul} R^{cd} \big) C_{acbd} \,, \label{defBach}
	\EE
	or equivalently
	\BE
	B_{ab}={\pul}\Box R_{ab} -{\sest}\big( \nabla_a \nabla_b +{\pul} g_{ab}\Box \big)R
	-{\tre}RR_{ab}+R_{acbd}\,R^{cd}+{\ctvrt}\big({\tre}R^2-R_{cd}R^{cd}\big)g_{ab} \,.\label{BachRicci}
	\EE
The	Bach tensor is traceless, symmetric, conserved, and well behaved under a conformal transformation $g_{ab}=\Omega^2 \tilde g_{ab}$:
	\begin{equation}
		g^{ab}B_{ab}=0 \,, \qquad B_{ab}=B_{ba} \,, \qquad
		\nabla^b B_{ab}=0
		\,, \qquad B_{ab}=\Omega^{-2}\tilde B_{ab}\,.
		\label{Bachproperties}
	\end{equation}

    It follows from \eqref{BachRicci} that the Bach tensor vanishes for Einstein spacetimes. Consequently,   Einstein spacetimes identically solve the vacuum field equations of quadratic gravity \eqref{EqQG4d} (see \cite{Buchdahl48}). 
    Furthermore, the last equation in \eqref{Bachproperties} implies that the Bach tensor vanishes for conformal-to-Einstein spacetimes as well. However, the Bach tensor can also vanish for spacetimes that are not conformal to Einstein (see \cite{Liu2013} for examples). 
	
Note that the quadratic-gravity field equations reduce considerably for spacetimes with a constant Ricci scalar:
\BE
R_{ab}-\Lambda \, g_{ab}=4k\, B_{ab}\,,\qquad \hbox{where} \qquad
k \equiv \frac{\alpha}{\gamma+8\beta\Lambda} .
\label{fieldeqsEWmod}
\EE
However, in this paper, we will also encounter cases with non-constant $R$ and, thus, we will work with the more general form \eqref{EqQG4d} of the field equations.

For the metric \eqref{SSKundtmetric}, the Bach tensor possesses two independent components
\BEA
B_{tt} &=& \frac{1}{24\,\rn^4}\; f\left\{4 + \rn^4\left[-\left(f''\right)^2 + 2\,f'\,f^{(3)} + 4\,f\,f^{(4)}\right]\right\}, \\
B_{rr} &=& \frac{1}{24\,\rn^4\,f}\; \left\{-4 + \rn^4\left[\left(f''\right)^2 - 2\,f'\,f^{(3)}\right]\right\}.
\EEA
A linear combination of these two components yields
\BE
\frac{B_{tt}}{f} + f\,B_{rr} = \frac{1}{6}\; f\,f^{(4)}. \label{Bachcond}
\EE
Consequently, $f^{(4)} = 0$ is a necessary condition for the vanishing of the Bach tensor. In fact, the Bach tensor vanishes if and only if:
\BE
f=a r^3 + b r^2 +c r + d, \quad \text{where} \quad b^2-3 a c = \frac{1}{\rn^4}. \label{Bachvanishescond}
\EE

For metric \eqref{SSKundtmetric}, the field equations of quadratic gravity \eqref{EqQG4d} reduce to 
\begin{small}
\bea
&& E_{tt}:  {\rn}^4 (\alpha -3 \beta ) \left(-4 f f^{(4)}+f''^2-2
f^{(3)} f'\right)-6 \gamma  \Lambda  {\rn}^4+6 \gamma  {\rn}^2 -4 \alpha +12 \beta =0, \label{FEQEtt} \\
&& E_{rr}: {\rn}^4 (\alpha -3 \beta ) \left(f''^2-2 f^{(3)}
f'\right)-6 \gamma  \Lambda  {\rn}^4+6 \gamma  {\rn}^2 - 4 \alpha +12 \beta = 0, \\
&& E_{\theta \theta}: {\rn}^4 \left[ (\alpha -3 \beta ) f''^2+3 \gamma  f''-2
(\alpha +6 \beta ) \left(f f^{(4)}+f^{(3)} f'\right)\right]+6 \gamma 
\Lambda  {\rn}^4 -4 \alpha +12 \beta = 0 .\label{eqEthth}
\eea
\end{small}
From the generalized Bianchi identity $\nabla^b E_{ab} = 0$ it follows that the $tt$ and $rr$ components are not independent, as
\be
\frac{( E^r{}_{r} - E^t{}_{t} ) f'}{2f} + E^r{}_{r}{}^\prime = 0 .
\label{eq:Biattrr}
\ee
Therefore, whenever $E_{rr}=0$ is satisfied, then also $E_{tt}=0$.
Similarly, $E_{\theta\theta},_\theta+\cot{\theta}
\left(
E_{\theta\theta}-E_{\phi\phi} \frac{1}{\sin{\theta}^2}\right)=0$ and  when $E_{\theta\theta}=0$ then also $E_{\phi\phi}=0$. Thus, in fact, there are only two independent equations to be solved, $E_{rr}=0$ and $E_{\theta\theta}=0$.

The trace of the field equations reads
\be
4 \gamma  \Lambda-\frac{2 \gamma }{{\rn}^2} -6 \beta  f f^{(4)}+\gamma  f''-6 \beta  f^{(3)}  f' = 0.
\label{eq:tr315}
\ee

It is also particularly useful to consider the combination 
$E_{tt}-E_{rr}$, which gives
\be
\rn (\alpha- 3 \beta) f f^{(4)} =0. \label{eqQGcomb}
\ee

From $\eqref{eqQGcomb}$, it is clear that two branches of solutions, $\alpha = 3\beta$ and $f^{(4)} = 0$, arise. Let us first focus on the case  $\alpha \neq 3\beta$, where $f$ is at most a third-degree polynomial in $r$. We will treat the cases for the various degrees of $f$ separately.

Before we proceed, note that our definition of the quadratic gravity action~\eqref{action} is such that if $\gamma=0$ then there is no cosmological constant term. Although this is not the most general form of a quadratic gravity action, regarding static spherically symmetric Kundt solutions it actually represents no loss of generality. In fact, defining a density of cosmological constant $\rho_\Lambda$ that is independent of $\gamma$, i.e., $\Lambda=\rho_\Lambda/\gamma$, it can be shown that if a model with $\gamma=0$ admits such a Kundt solution, then it must also have $\rho_\Lambda=0$. To prove this statement, according to Eq.~\eqref{eqQGcomb}, it is sufficient to consider the two scenarios, $\alpha=3\beta$ and $f^{(4)}=0$. In the first, taking $\gamma\to 0$ in Eq.~\eqref{FEQEtt} immediately gives $\rho_\Lambda=0$. In the second, taking $\gamma\to 0$ in Eq.~\eqref{eq:tr315} yields $3\beta  f^{(3)}  f'=2\rho_\Lambda$, hence the only nontrivial case to be considered is when $f$ is a cubic polynomial. As it turns out (see Sec.~\ref{sec_a3} below), this case requires $\beta=0$, implying $\rho_\Lambda=0$.

\subsection{Case $\alpha \neq  3 \beta$ and $f=a_0 + a_1 r + a_2 r^2 + a_3 r^3$ with $a_3 \neq  0$ }

\label{sec_a3}

In this case, Eq. \eqref{eqEthth} implies 
\be
\beta=0=\gamma
\ee
and quadratic gravity reduces to Weyl conformal gravity. 

Then, the remaining field equations are solved iff
\be
a_1 = \frac{{a_2}^2 {\rn}^4-1}{3 a_3 {\rn}^4}. \label{eqa1conf}
\ee
The function
\be
f=a_0 + a_1 r + a_2 r^2 + a_3 r^3 \label{congravKundt}
\ee
is thus a solution to conformal gravity provided \eqref{eqa1conf} holds.  By applying the transformations \eqref{tr_t} and \eqref{tr_r}, one can set $a_3$ equal to $1$ and another coefficient, e.g., $a_2$ or $a_0$, equal to $0$. Therefore, taking \eqref{eqa1conf} into account, the remaining free parameters are $\Lambda$, $\rn$, and at most one integration constant. Specifically, if $a_0$ is set to zero, either $a_1$ or $a_2$ remains as a free parameter (since they are related by \eqref{eqa1conf}). Alternatively, if $a_2$ is set to zero, $a_1$ is fully determined by \eqref{eqa1conf}, leaving no free integration parameters.

In this case, the Bach tensor vanishes while the Ricci scalar $R$ is non-constant
\be
R=\frac{2}{{\rn}^2} -2 a_2  - 6 a_3 r. 
\ee
The non-constancy of $R$ implies that this spacetime is not an Einstein space. Furthermore, the non-constant $R$ implies that it has not been covered in \cite{PraPraPodSva21}, which assumed $R=$const. and primarily focused on expanding static spherically symmetric spacetimes, although $R=$const. Kundt spacetimes were found there as special cases.

For conformal gravity, the field equations reduce to $B_{ab}=0$; thus, in addition to all Einstein spacetimes, they are also identically satisfied by all conformal-to-Einstein spacetimes.

To verify that our solution is not such a trivial case, we note that for spacetimes conformal to Einstein, there exists a vector $V$ obeying  (see \cite{Kozameh85})
\be
\nabla_a C^{abcd}-(D-3) V_a C^{abcd} =0. \label{eqconformaltoEinst}
\ee
For the metric \eqref{eqa1conf}, \eqref{congravKundt}, this set of linear equations for a vector $V^a$ is overdetermined, and thus this metric is not conformal to Einstein.

Since ${f''=2 (a_2+3 a_3 r)}$, the polynomial curvature invariants \eqref{eqInvars} blow up at ${r\to\pm\infty}$ and the affine-parameter integral \eqref{eq:affpar} converges at ${r\to\infty}$ for ${a_3<0}$ and at ${r\to-\infty}$ for ${a_3>0}$ for all causal geodesics except radial null ones ${\epsilon=L=0}$; otherwise the limit is not attainable. Hence, ${r\to-\infty}$ is a geodesic singularity {(i.e., a curvature singularity reachable by a causal geodesic in a finite affine parameter)} for ${a_3>0}$, while ${r\to\infty}$ is a geodesic singularity  for ${a_3<0}$.

Note that the cases ${f''=\text{const.}}$, which are studied in next sections~\ref{sec_a2}--\ref{sec_a0}, are direct products of constant curvature two-spaces and are regular spacetimes.

\subsection{Case $\alpha \neq  3 \beta$ and $f=a_0 + a_1 r + a_2 r^2 $ with $a_2 \neq  0$ }\
\label{sec_a2}

Since in this case ${f''=\text{const.}}$, the resulting spacetime is a direct product of constant curvature two-spaces. Consequently, the curvature tensors are covariantly constant.
\be
	C_{abcd;e} = 0 = R_{ab;c}
\ee
and the Ricci scalar $R$ is constant.

The trace equation reduces to

\be
2 \gamma \left(a_2 - \frac{1}{{\rn}^2} + 2 \Lambda\right)=0 \label{eqtracefquad}
\ee
and this case naturally splits into two subcases.

\subsubsection{Subcase $\gamma \neq 0$, Bachian-Nariai and Bachian-Bertotti-Robinson spacetime}

For $\gamma \neq 0$, Eq. \eqref{eqtracefquad}
gives
\be
a_2=\frac{1-2\Lambda \rn^2}{\rn^2},\ \ \ \ 
 \label{pol2_r0}
\ee
and the remaining field equations are reduced to
\be
(a_2+\Lambda ) [3 \gamma -8 \Lambda  (\alpha -3 \beta )] =0.\label{cond_class}
\ee
Vanishing of the first bracket leads to the Nariai spacetime \eqref{Nariai}, with $a_2=-\Lambda=-\frac{1}{\rn^2}$. Using \eqref{tr_t}, \eqref{tr_r}, one can set $a_1=0$ and $a_0$ to either $1$ or $-1$ and thus the only free parameter is $\Lambda$.
The vanishing of the second bracket implies
\be
\Lambda = \frac{3 \gamma }{8 (\alpha -3 \beta )} \,.\label{pol2_Lambda}
\ee
Note that this is equivalent to $\Lambda=\frac{3}{8 k}$ in the notation of \cite{PraPraPodSva21}.
Eqs. \eqref{pol2_r0} and \eqref{pol2_Lambda} are sufficient for the vacuum field equations to hold.
This direct-product metric has $R_{(I)}=-2a_2$ with an arbitrary sign. Therefore, for $2\Lambda \rn^2>1$, it is Bachian-Nariai and for $2\Lambda\rn^2<1$, it is Bachian-Bertotti-Robinson (see Table \ref{tableNariaietal}).
Using \eqref{tr_t}, \eqref{tr_r}, one can set $a_1=0$ and $a_0$ to either $1$ or $-1$. Therefore, the only free parameters is  $\rn$.

While the Bach tensor for this spacetime is non-vanishing and depends non-trivially on $r$ and $\theta$, its associated scalar invariant remains constant
\be
B_{ab} B^{ab} = \frac{\gamma ^2 [8 {a_2} (\alpha -3 \beta )+3 \gamma ]^2}{256 (\alpha -3 \beta )^4}
\ee
as well as the Ricci scalar. This is the Bachian-Nariai 
or Bachian-Bertotti-Robinson solution of \cite{PraPraPodSva21} depending on sign$(a_2)$.

\subsubsection{Subcase $\gamma =0$, Nariai and Bertotti-Robinson spacetimes}
\label{sec_f2_g0}

In the $\gamma=0$ case,  the field equations reduce to 
\be
	\left({a_2}^2 {\rn}^4 -1\right)  (\alpha - 3 \beta)=0.
\ee
This section assumes $\alpha \neq  3 \beta$ and consequently we are left with two cases
\be
a_2 = \pm \frac{1}{{\rn}^2},
\ee
for both of which the Bach tensor vanishes by \eqref{Bachvanishescond}. These spacetimes are direct products of constant-curvature spaces with  $R_{(I)} = \mp \frac{2}{\rn^2}$, and $R_{(II)}=\frac{2}{\rn^2} $ (cf. \eqref{curv-2spaces}). Thus,
the case $a_2 = - \frac{1}{{\rn}^2}$ is Nariai and the case with 
$
a_2 =  \frac{1}{{\rn}^2} 
$
is Bertotti-Robinson (see Table \ref{tableNariaietal}). Similarly as above, in both cases, the only free parameter is $\rn$.

\subsection{{Case $\alpha \neq  3 \beta$ and $f=a_0 + a_1 r $ with $a_1 \neq  0$, Pleba\'nski-Hacyan spacetime  } }
\label{sec_a1}

In this case, the field equations of quadratic gravity hold iff
\be
\label{CaseF2quad}
{\rn}^2 = \frac{4 (\alpha -3 \beta )}{3 \gamma }, \ \ \ \Lambda = \frac{3 \gamma }{8 (\alpha -3 \beta )}.
\ee
Using \eqref{tr_t}, \eqref{tr_r}, one can set $a_1=1$ and $a_0=0$ and thus there are no free parameters.

This spacetime is again a direct product of constant-curvature spaces. In this case, the curvature $R_{(I)}$ vanishes \eqref{curv-2spaces}, and thus this is the Plebański--Hacyan spacetime (see Table \ref{tableNariaietal}). From \eqref{Bachvanishescond}, the Bach tensor is non-zero; thus, the Einstein tensor is necessarily non-vanishing as well. This spacetime was identified as a vacuum solution to quadratic gravity in \cite{PraPraPodSva21}.

\subsection{Case $\alpha \neq  3 \beta$ and $f$ constant, Pleba\'nski-Hacyan again }
\label{sec_a0}

For constant $f$, the field equations of quadratic gravity hold if and only if the conditions in \eqref{CaseF2quad} hold. This spacetime is a direct product of constant-curvature spaces with vanishing $R_{(I)}$ \eqref{curv-2spaces}, and thus it is again the Plebański--Hacyan metric.

\subsection{Theories with $\alpha= 3 \beta$}
\label{sec_a3b_QG}

The main difference from the previous case $\alpha \neq 3 \beta$ is that for theories with $\alpha = 3 \beta$, we cannot use Eq. \eqref{eqQGcomb} to obtain $f^{(4)} = 0$. Thus, in this case, $f$ is generally non-polynomial.

The $E_{tt}$ field equation \eqref{FEQEtt} reduces to
\be
\label{3.5fe00}
 \gamma  {\rn}^2 \left(\Lambda  {\rn}^2-1\right) = 0\,. 
\ee

Let us first study the $\gamma =0$ case.

\subsubsection{Case $\gamma =0$ and $\alpha= 3 \beta$}
\label{sec_a3bg0}

For $\gamma =0$, the remaining field equation \eqref{eqEthth}
reduces to
\be
f f^{(4)}+f^{(3)} f' =0. \label{caseA3BG0eq}
\ee
 Integrating the equation once with respect to $r$ yields

 \be
 f\,f^{(3)} = C. \label{firstinteg}
 \ee
For $C=0$, this implies $f^{(3)}=0$, and thus (see Sec. \ref{sec_f2_g0})
\be
f=a_0+a_1 r+a_2 r^2\,.
\ee
There are no further constraints arising from the field equations. Thus, for $C=0$, all direct product metrics given in Table \ref{tableNariaietal} are solutions. Using \eqref{tr_t}, \eqref{tr_r}, one can set as above $a_1=0$ and $a_0=\pm 1$ and the solution has two free parameters $\rn$, $a_2$.

For the rest of this section, let us assume $C\neq 0$.
Equation \eqref{firstinteg} can be integrated once more to yield
\be
 f f'' - \frac{1}{2}\left(f'\right)^2  = Cr + D.
\ee
Now, using the substitution $f(r)=y(r)^2$ this equation reduces to
\be
2 y^3 y''= C r + D.
\ee
Finally, the substitution $x=Cr+D$ yields 
\be
y''=\frac{1}{2 C^2} x y^{-3}.
\ee
This is a special case of the Emden-Fowler equation
\be
y''=A x^n y^m,
\ee
with $n=1$, $m=-3$. These particular values of $m$ and $n$ do not belong to the known solvable cases of the Emden-Fowler equation  \cite{Polyaninbook}.

A particular solution is 
$$
y= \lambda x^{3/4},
$$
which corresponds to the particular solution of \eqref{caseA3BG0eq}
\be
f(r) = a_0 (r - r_0)^{3/2} . \label{partsol-caseA3BG0eq}
\ee
Note that for this solution, the polynomial curvature invariants diverge at ${r=r_0}$ because ${f''={3 a_0}/{(4 \sqrt{r-r_0})}}$, implying a curvature singularity. From the convergence of \eqref{eq:affpar}, we see that it is reachable by a geodesic at a finite value of the affine parameter.

Using \eqref{tr_t}, \eqref{tr_r}, one can set $a_0=\pm 1$
and the solution has only one free parameter $\rn$.

In Sec. \ref{sec_alpha3b}, we briefly return to solving \eqref{caseA3BG0eq} using a generalized power series with an arbitrary leading power and integer steps.
 However, this yields only \eqref{partsol-caseA3BG0eq}, cf. \eqref{resf_32}.
This suggests that more general expansions, such as Puiseux series with fractional steps or series involving logarithmic terms, would be needed to study the general solution  of \eqref{caseA3BG0eq}. This is beyond the scope of this paper.

\subsubsection{Case $\gamma \neq 0$ and $\alpha= 3 \beta$}
\label{sec_a3bgne0}

For $\gamma \neq 0$, Eq. \eqref{3.5fe00} yields 
\be
{\rn}^2 = \frac{1}{\Lambda}\label{eq_r0}
\ee
and the remaining field equation \eqref{eqEthth} reduces to
\be
2 \gamma  \Lambda +\gamma  f''-6 \beta  \left(f f^{(4)}+f^{(3)} f' \right) =0.\label{casealpha3betaeqf}
\ee
 Integrating the equation once with respect to $r$ yields

 \be
 \gamma\,f' + 2\gamma\Lambda\,r - 6\beta\,f\,f^{(3)} = C.
 \label{eq_a3b_1}
 \ee
Rewriting the non-linear term as a total derivative, we can integrate the equation once more to obtain the second integral
\be
\gamma\,f + \gamma\Lambda\,r^2 - 6\beta\left[ f f'' - \frac{1}{2}\left(f'\right)^2 \right] = Cr + D,
\label{eqquada3b}
\ee
where $D$ is a constant of integration.

An exact particular solution of \eqref{eqquada3b} is
\be
f = -\Lambda r^2 + \left( \frac{C}{\gamma} \right) r + \frac{D - \frac{3\beta C^2}{\gamma^2}}{\gamma + 12\beta\Lambda}.
\ee
Taking into account also \eqref{eq_r0},  this corresponds to the Nariai spacetime (with only one free parameter $\Lambda$). 

To find a more general solution, we will resort to a power series expansion.

\subsubsection{Case $\alpha= 3 \beta$, power series solutions}
\label{sec_alpha3b}

Due to the necessary condition $f^{(4)}=0$ in the $\alpha \neq  3 \beta$ case (see Eq. \eqref{eqQGcomb}), it was possible to obtain all solutions in a closed form within this class.

In contrast, for $\alpha= 3 \beta$, we have identified only particular solutions. To complement the study of the $\alpha= 3 \beta$ case, we employ a generalized power series around an arbitrary point $r_0$, adopting an arbitrary leading power $p$ followed by integer steps:
\be 
f=\Delta^p\sum_{i=0}^\infty a_i \Delta^{i},\ \  \Delta\equiv r-r_0,\ \  p\in \mathbb{R},\ \ a_0\neq 0.\label{series_f}
\ee
Note that by using \eqref{tr_r}, one can arbitrarily shift the expansion point $r_0$. Substituting \eqref{series_f} into  \eqref{casealpha3betaeqf}, we  obtain
\be
2\gamma\Lambda+\gamma
\sum_{l=p-2}^\infty\Delta^l a_{l-p+2}(l+2)(l+1)
-6\beta \sum_{l=2p-4}^\infty \sum_{j=0}^{l-2p +4}\Delta^l 
a_j a_{l-2p+4-j}(p+j)(p+j-1)(p+j-2)(l+1)=0.
\label{eq_series_f}
\ee
We first determine the values of $p$ that are compatible with \eqref{eq_series_f} by solving the resulting indicial equation.\\

$\bullet$ $p>2$ leads to $\Lambda=0$ and $a_0 p (p-1)=0$, and thus, $a_0=0$. Therefore the case $p>2$ is not allowed.\\

$\bullet$ $p=2$ $\rightarrow$ the order 0
of Eq. \eqref{eq_series_f}
 gives $2\gamma \Lambda +2 \gamma a_0 =0$ and thus $ a_0=-\Lambda$;\\

 $\bullet$ $p<2$ $\rightarrow$ the lowest order of Eq. \eqref{eq_series_f}
 gives $ \beta  a_0^2 p (p-1)(p-2)(2p-3)=0$ and thus there are the following possibilities:  $p=0,1, 3/2$.\\

  Now, let us examine the allowed cases in more detail. \\

$\bullet$ {\bf{case}} $p=2$\\
Eq. \eqref{eq_series_f}
 gives 
  \be
 a_0=-\Lambda.
 \ee
For generic $\Lambda$, Eq. \eqref{eq_series_f} then implies $a_i=0$ for all
$i\geq 1$, which corresponds to the Nariai spacetime.

However, for special values of $\Lambda$
 \be
 \Lambda=-\frac{\gamma}{6\beta L (L+1)},  \ \ L \in \mathbb{N}, \label{Lambda_spec}
 \ee
$a_i=0$ for $1\leq i<L$, $a_L$ is arbitrary
 and for $i>L$, the coefficients $a_i$ are determined by the recurrence relation
 \be
 a_l=\frac{6\beta\sum_{j=1}^{l-1}a_j a_{l-j}(j+2)(j+1)j}{(l+2)[\gamma+6\beta\Lambda l (l+1)]}.
 \label{eq_cj}
 \ee
 Note that $a_l$ as given by \eqref{eq_cj} is nonzero for integer multiples of $L$.
 This class has one free parameter $a_L$.

This solution seems to be a Kundt counterpart of the expanding extreme higher-order (discrete) Schwarzschild-Bach-de Sitter solution found in \cite{PraPraPodSva21} with $\Lambda=-\frac{3\gamma}{[6L (L+1)-8]\alpha+24\beta}$,
which for $\alpha=3\beta$ reduces to \eqref{Lambda_spec}.

 $\bullet$ {\bf{case}} $p=3/2$\\
 In this case, we obtain $\gamma a_0=0$. Since $a_0=0$ is not allowed, it follows that $\gamma=0$ and the series solution truncates to one term 
 \be
 f=a_0 \Delta^{3/2}.\label{resf_32}
 \ee
 This is the particular solution \eqref{resf_32}
 found in Sec. \ref{sec_a3bg0}.\\

 $\bullet$ {\bf{case}} $p=1$\\
 For $p=1$,  Eq. \eqref{eq_series_f}  gives coefficients for the lower-order terms:
 \bea
 a_2\rovno \frac{\gamma(a_1+\Lambda)}{18\beta a_0}\,,\\
 a_3\rovno -\frac{\gamma(a_1+\Lambda)(12\beta a_1-\gamma)}{864 \beta^2 a_0^2}\,,\\
 a_4\rovno \frac{\gamma(a_1+\Lambda)(432\beta^2a_1^2-72\beta\gamma a_1+\gamma^2-24\beta\gamma \Lambda)}{77760 \beta^3 a_0^3}\,,\dots
 \eea
The general coefficient $a_{l+2}$ is determined by the recurrence relation
\be
a_{l+2}=\frac{\gamma a_{l+1}(l+2)-6\beta\sum_{j=1}^{l+1}
a_j a_{l+2-j}j(j+1)(j-1)}{6\beta a_0 (l+3)(l+2)(l+1)},\ \ l \geq 1.
\ee

This class possesses two free parameters, $a_0$ and $a_1$. However, using \eqref{tr_t}, \eqref{tr_r}, one can set $a_0=1$ and $r_0=0$ and  reduce free parameters to $a_1$ and $\Lambda$.  It reduces to the Nariai spacetime for $a_1=-\Lambda$, in which case all $a_i$ for $i\geq 2$ vanish. For $\gamma=0$ (and $\beta \neq 0$), the series truncates after the quadratic term and the solution reduces to the spacetimes given in Tables \ref{tableNariaietal}/\ref{Tab_constR}: (Bachian-)Nariai, (Bachian-)Bertotti-Robinson, and Pleba\'nski-Hacyan. \\

$\bullet$ {\bf{case}} $p=0$ \\

For $p=0$, Eq. \eqref{eq_series_f} implies
\bea
a_4\rovno -\frac{a_1 a_3}{4a_0}+\frac{\gamma (a_2+\Lambda)}{72 \beta a_0}\,,\\
a_5\rovno -\frac{a_3(a_0 a_2-a_1^2)}{10 a_0^2}
+\frac{\gamma [3a_0a_3-2 a_1 (a_2+\Lambda)]}{360 \beta a_0^2}\,,\\
a_6\rovno -\frac{a_3^2}{20 a_0}
+\frac{a_1 a_3 [9\beta (2a_0a_2-a_1^2)-\gamma a_0]}{180 \beta a_0^3}
-\frac{\gamma(a_2+\Lambda)[36\beta (a_0a_2-a_1^2)-\gamma a_0]}{12960 \beta^2 a_0^3}, \ \dots\,
\eea
$a_5$ and all subsequent coefficients  are determined by the recurrence relation
\be
a_{l+4}=\frac{\gamma a_{l+2}(l+2)
-6\beta\sum_{j=1}^{l+3}a_ja_{l+4-j}j(j-1)(j-2)}{
6\beta a_0 (l+4)(l+3)(l+2)}, \ \ \ l \geq 1.
\ee
This solution possesses four parameters $a_0,a_1,a_2,a_3$.
Using \eqref{tr_t}, \eqref{tr_r}, one can set $r_0=0$ and, e.g., $a_0=\pm 1$ or $a_1=1$, reducing thus the number of free parameters to three integration constants plus $\Lambda$.
For $a_2=-\Lambda$ and $a_3=0$, the solution reduces to Nariai (all $a_i=0$ for $i\geq 3$). For $a_3=0$ and $\gamma=0$,  the series truncates after the quadratic term, and the solution again reduces to the spacetimes given in Tables \ref{tableNariaietal}/\ref{Tab_constR}.

 Note that for $a_0 \neq  0$, the analyticity conditions of Cauchy-Kowalewski for \eqref{eqquada3b} are fulfilled, and the formal solution should converge to an actual solution of  \eqref{eqquada3b} in some neighborhood of $r=0$.
This new solution admits a non-trivial Bach tensor and a  non-constant Ricci scalar $R$, c.f. \eqref{Bachcond}, \eqref{eqricci}.

Let us conclude this section by summarizing all possible vacuum static spherically symmetric  Kundt metrics in quadratic gravity for various combinations of 
parameters of the theory in Tab. \ref{tab_allSol}.
It is interesting to note that such solutions exist only if $\Lambda\neq 0$ or $\gamma =0$. For this reason, no Kundt solution was found in~\cite{Podolskyetal20,Giacchini:2025mlv}, which only considered the case of $\Lambda=0$ and $\gamma\neq 0$.

\begin{tiny}
\begin{table}[h]
		\begin{center}
			\begin{tabular}{|l||l|l|c|c|c|}
				\hline
Theory & Metric function $f$ & Conditions & $R=$ & Sec. & Spacetime \\[1mm] \hline\hline
 $\alpha$, $\beta$, $\gamma$ arb., $\Lambda\neq 0$ &
 $a_0+a_1 r+a_2 r^2$
 &  $\rn^2=\frac{1}{\Lambda}=-\frac{1}{a_2}$  
 & c. &  \ref{sec_a2}& Nariai \\
$\alpha\neq 3\beta$, $\gamma\neq 0$, $\Lambda=\frac{3\gamma}{8(\alpha-3\beta)}$ &
 $a_0+a_1 r+a_2 r^2$
 & $\rn^2=\frac{1}{a_2+2\Lambda}$   & c. &\ref{sec_a2} & BN/BBR\\[1mm]
$\alpha\neq 3\beta$, $\gamma\neq 0$,
$\Lambda=\frac{3\gamma}{8(\alpha-3\beta)}$
&  $a_0+a_1 r$ 
&  $\rn^2=\frac{4(\alpha-3\beta)}{3\gamma}$  &
c. &
\ref{sec_a1} & PH \\[1mm]
$\alpha\neq 3\beta$, $\gamma\neq 0$,
$\Lambda=\frac{3\gamma}{8(\alpha-3\beta)}$
&  $a_0$ 
&  $\rn^2=\frac{4(\alpha-3\beta)}{3\gamma}$  &
c. &
\ref{sec_a0} & PH  \\[1mm] \hline
$\alpha$, $\beta$ arb.,  $\gamma=0$
& $a_0+a_1 r+a_2r^2$
& $a_2=\mp\frac{1}{\rn^2}$ & c. & \ref{sec_a2} & BR/Nariai
				\\[1mm] \hline
$\alpha$ arb., $\beta=0=\gamma$
&  $a_0+a_1 r+a_2 r^2+a_3r^3$
& $a_1=\frac{a_2^2 \rn^4-1}{3 a_3 \rn^4}$ & nc. &
                \ref{sec_a3} & \\[1mm] \hline
$\alpha=3\beta$, $\gamma$ arb., $\Lambda>0$ 
& $a_0 r^2$ 
& $\rn^2=1/\Lambda$, $a_0=-\Lambda$
& c. & \ref{sec_alpha3b} & Nariai \\
$\alpha=3\beta$, $\gamma$ arb.,  $\Lambda =\frac{\gamma}{6\beta L (L+1)}>0$
& \eqref{series_f}, $p=2$
 &$\rn^2=1/\Lambda$, $a_0=-\Lambda$ 
 & nc. & 
    \ref{sec_alpha3b}&  ho Bachian Kundt\\
$\alpha=3\beta$, $\gamma$ arb.,  $\Lambda >0$
&
$a_0 r +a_1 r^2$
 &$\rn^2=1/\Lambda$, $a_1=-\Lambda$
 & c. &\ref{sec_alpha3b} & Nariai\\
$\alpha=3\beta$, $\gamma$ arb.,  $\Lambda >0$
& \eqref{series_f}, $p=1$
 &$\rn^2=1/\Lambda$, 
 $a_2=\frac{\gamma (a_1+\Lambda)}{18\beta a_0}$& nc. & \ref{sec_alpha3b}& Bachian Kundt\\
$\alpha=3\beta$, $\gamma$ arb.,  $\Lambda >0$
& $a_0+a_1 r+a_2r^2$
 &$\rn^2=1/\Lambda$,
 $a_2=-\Lambda$
 & c. &\ref{sec_alpha3b} & Nariai\\
$\alpha=3\beta$, $\gamma$ arb.,  $\Lambda >0$
& \eqref{series_f}, $p=0$
 &$\rn^2=1/\Lambda$
& nc. & \ref{sec_alpha3b}& Bachian Kundt \\
  $\alpha=3\beta$, $\gamma=0$ &
 $a_0+a_1r+a_2 r^2$& & c. & 
 \ref{sec_a3bg0} 
 & (B)N/(B)BR/PH \\ 
   $\alpha=3\beta$, $\gamma=0$ &
$a_0 r^{3/2}$  & & nc. & \ref{sec_a3bg0} &\\
\hline
			\end{tabular} \\[2mm]
			\caption{ Summary of all static spherically symmetric  Kundt metrics as vacuum solutions to quadratic gravity  (c/nc stands for constant and nonconstant,
            ho stands for higher-order, (B)N stands for (Bachian-)Nariai, PH stands for Pleba\'nski-Hacyan, and  BR and (B)BR stand for  Bertotti-Robinson and  (Bachian-)Bertotti-Robinson, respectively).  }
            \label{tab_allSol}
\end{center}
	\end{table}
\end{tiny}



In Table \ref{Tab_constR}, we summarize $R=$const. static, spherically symmetric solutions in quadratic gravity (corresponding to $f(r)$ being at most quadratic in $r$). Here, more technical details are given than in Table \ref{tableNariaietal}.  {In fact, all solutions in Table \ref{Tab_constR}  were already found in the conformal-to-Kundt coordinates in \cite{PraPraPodSva21} in  the classes $[0,0]^\infty$, $[0,1]^\infty$, $[0,2]^\infty$ therein. In addition, the Nariai and Bachian-Nariai/Bachian-Bertotti-Robinson spacetimes also corresponds to the Kundt limits of expanding classes $[0,0]$, $[0,1]$, $[0,2]$ of \cite{PraPraPodSva21}.}

\begin{tiny}
\begin{table}[h]
		\begin{center}
			\begin{tabular}{|c|c|l|c|c|l|c|}
				\hline
$R_{I}$ & geometry & $a_2$ & $R$  & $  S_{01}$ & Conditions on parameters& Spacetime\\[1mm] \hline\hline
$+$ & $dS_2\times S^2$ & $<0$ & $+$  & $\neq 0$ & $\Lambda=\frac{3\gamma}{8(\alpha-3\beta)}>0$
or $\gamma=0=\alpha-3\beta$& Bachian-Nariai\\
$+$ & $dS_2\times S^2$ & $=-\frac{1}{\rn^2}<0$ & $+$ & $0$ & 
$\Lambda=\frac{1}{\rn^2}>0$& Nariai\\[3mm] \hline
$0$ & $M_2\times S^2$ & $0$ & $+$ & $+$ &  ($\rn^2=\frac{1}{2\Lambda}$, $\Lambda=\frac{3\gamma}{8(\alpha-3\beta)}$)
or $\gamma=0=\alpha-3\beta$& Pleba\'nski-Hacyan\\[3mm] \hline 
$-$ & $AdS_2\times S^2$ & 
$\left(0,\frac{1}{\rn^2}\right)$
& $+$  & $+$ &$\Lambda=\frac{3\gamma}{8(\alpha-3\beta)}$
or $\gamma=0=\alpha-3\beta$ & Bachian-Bertotti-Robinson\\
$-$ & $AdS_2\times S^2$ & $=\frac{1}{\rn^2}>0$ & $0$ & $+$ &  $\gamma=0$  & Bertotti-Robinson\\
$-$ & $AdS_2\times S^2$ & 
$>\frac{1}{\rn^2}$ & $-$ & $+$ &  $\Lambda=\frac{3\gamma}{8(\alpha-3\beta)}$
or $\gamma=0=\alpha-3\beta$ & Bachian-Bertotti-Robinson\\ \hline
\end{tabular}\\[2mm]
\end{center}
			\caption{Summary  of $R=$const. spaces. These are direct product spacetimes, the Ricci scalar and tensor are decomposable, i.e., $R=R_I+R_{II}$, where ${R_I=-f''=-2a_2}$
            and ${R_{II}=\frac{2}{\rn^2}}$, so that, ${R=\frac{2}{\rn^2}-f''=2 \left(\frac{1}{\rn^2}-a_2\right)}$ and it vanishes only for $a_2=\frac{1}{\rn^2}$, the Bertotti-Robinson spacetime. Then also ${C_{abcd}C^{abcd}=0}$ and the spacetime is conformally flat. The Ricci component ${\Phi_{11}=\frac{1}{4}\left(\frac{2}{\rn^2}+f''\right)=\frac{1}{2}\left(\frac{1}{\rn^2}+a_2\right)}$ and vanishes only for ${a_2=-\frac{1}{\rn^2}}$ for the Nariai spacetime. In the cases when $\gamma\neq0$, then $R=4\Lambda$.}
            \label{Tab_constR}
\end{table}
\end{tiny}


\section{Six-derivative gravity}\label{sc:SDG}

\label{sec_a3b}

In this section, we illustrate the possible diversity of static spherically symmetric Kundt solutions to gravity models beyond quadratic gravity by focusing in some cases of six-derivative gravity.  In particular, we highlight how the conditions $\Lambda \neq 0$ or $\gamma =0$ can be relaxed and still produce the same geometry.

\subsection{Field equations}
In four dimensions, a generic six-derivative gravity model can be described by the action
\bea
S & = & \int \dd^4 x\, \sqrt{-g}\, \Big[
\gamma \,(R-2\Lambda) +\beta\,R^2  - \alpha\, C_{abcd}\, C^{abcd}	
- \eta_1 \, C_{abcd} \, \Box \, C^{abcd} 
+ \eta_2 \, R \, \Box \, R 
\nonumber
\\
&&
+ \eta_3 \, R^3 
+ \eta_4 \, R \, S_{ab} \, S^{ab} 
+ \eta_5 \, S_{ab} \, S^a{}_c \, S^{cb}
+ \eta_6 \, S_{ab} \, S_{cd} \, C^{acbd} 
+ \eta_7 \, R \, C_{abcd} \, C^{abcd}
\nonumber
\\
&&
+ \, \eta_8 \, C_{abcd} \, C^{abef} \, C^{cd}{}_{ef} \Big]\,,
\label{action-6der}
\eea
where we supplemented the fourth-order gravitational action by terms containing six derivatives of the metric, with coupling constants $\eta_{1,\ldots,8}$. Here, $S_{ab} = R_{ab} - \frac{1}{4} g_{ab} R$ is the traceless Ricci tensor. Any other sixth-order term can be recast as a combination of the terms in~\eqref{action-6der} and boundary or topological terms~\cite{Decanini:2007gj}, that do not contribute to the field equations.

Using the ansatz~\eqref{SSKundtmetric} for the metric, the field equations of generic six-derivative gravity read
\begin{small}
\bea
E_{tt}: \quad
&&
36  \rn^4  \gamma \left(1-\Lambda  \rn^2\right) 
-24 \rn^2 (\alpha -3 \beta ) 
+12 \left(12 \eta _3+3 \eta _4+\eta _6+4 \eta _7\right)+8 \eta _8
\nonumber
\\
&& 
+3 \rn^4 \left[ 2 \rn^2 (\alpha -3 \beta ) -36 \eta _3+3 \eta _4+\eta _6-12 \eta _7 -2\eta _8  \right]  \big( f''^2  - 2 f' f^{(3)} - 4 f  f^{(4)} \big)
\nonumber
\\
&&
+ \rn^6 \left(36 \eta _3+9 \eta _4+3 \eta _6+12 \eta _7+2 \eta _8\right) \left[ f''^3 - 3 \big(  f' f^{(3)} + 2 f f^{(4)} \big) f'' - 6 f f^{(3)2}\right]  
\nonumber
\\
&&
- 6  \rn^6 \left(\eta_1 - 3 \eta_2 \right) \left[ 4  f'^2 f^{(4)} + \big( 3 f^{(3)2} + 10 f''  f^{(4)} + 14 f' f^{(5)} + 4 f f^{(6)} \big) f \right] 
= 0,
\label{6der-eom-tt}
\eea
\bea
E_{rr}:
\quad
&& 
36  \rn^4 \gamma \left(1-\Lambda  \rn^2\right) 
-24 \rn^2 (\alpha -3 \beta ) 
+12 \left(12 \eta _3+3 \eta _4+\eta _6+4 \eta _7\right)+8 \eta _8
\nonumber
\\
&& 
+3 \rn^4 \left[ 2 \rn^2 (\alpha -3 \beta ) -36 \eta _3+3 \eta _4+\eta _6-12 \eta _7 -2\eta _8  \right]  \big( f''^2  - 2 f' f^{(3)} \big)
\nonumber
\\
&&
+ \rn^6 \left(36 \eta _3+9 \eta _4+3 \eta _6+12 \eta _7+2 \eta _8\right) \big( f''^3 - 3   f'  f'' f^{(3)} \big)  
\nonumber
\\
&&
- 6  \rn^6 \left(\eta_1 - 3 \eta_2 \right) \left[  4  f'^2 f^{(4)} + \big( f^{(3)2} - 2 f''  f^{(4)}  + 2 f' f^{(5)}  \big) f \right] 
= 0,
\label{6der-eom-rr}
\eea
\bea
E_{\theta \theta}: 
\quad
&& 
-72 \rn^6 \gamma  \Lambda  + 48 \rn^2 (\alpha - 3 \beta ) -16 \left(36 \eta _3+9 \eta _4+3 \eta _6+12 \eta _7+2 \eta _8\right)
\nonumber
\\
&&
-12 \rn^2 \left[ 3 \rn^4 \gamma  -36 \eta _3+3 \eta _4+\eta _6-12 \eta _7 -2\eta _8  \right] f''
-12 \rn^6 (\alpha -3 \beta ) f''^2
\nonumber
\\
&&
+12 \rn^4 \left[ 2 \rn^2 (\alpha +6 \beta ) -4 \eta _1+12 \eta _2+72 \eta _3-6 \eta _4+\eta _6-2 \eta _8 \right]  \big( f' f^{(3)} + f f^{(4)} \big)
\nonumber
\\
&&
-\rn^6 \left(36 \eta _3+9 \eta _4+3 \eta _6+12 \eta _7+2 \eta _8\right) f''^3
+36  \rn^6 \left(\eta _1+3 \eta _2\right) f f^{(3)2}
\nonumber
\\
&&
-6  \rn^6 \left(72 \eta _3+6 \eta _4-\eta _6-2 \eta _8\right) \left[  f f^{(3)2} + \big( f' f^{(3)}  + f f^{(4)}  \big) f'' \right] 
\nonumber
\\
&&
+24 \rn^6 \left(\eta _1+6 \eta _2\right) \left[ 2 f'^2 f^{(4)} + \big(  f'' f^{(3)}  + 4 f  f^{(5)} \big) f' + \big( 3  f''  f^{(4)}  + f f^{(6)} \big) f \right] = 0.
\label{6der-eom-thth}
\eea
\end{small}
Note, however, that the identity~\eqref{eq:Biattrr} shows that the components $tt$ and $rr$ are not independent and there are only two independent equations to be solved, $E_{rr}=0$ and $E_{\theta\theta}=0$.

In addition, for the combination $E_{tt}- E_{rr}$, we obtain
\begin{small}
\bea
&&
\big\lbrace
\rn^2 \left(2 \eta _1-6 \eta _2+36 \eta _3+9 \eta _4+3 \eta _6+12 \eta _7+2 \eta _8\right) f^{(3)2}
\nonumber
\\
&&
+ 2 \left[2 \rn^2 (\alpha -3 \beta )-36 \eta _3+3 \eta _4+\eta _6-12 \eta _7 -2\eta _8 \right]  f^{(4)} 
\nonumber
\\
&&
+ \rn^2 \left[ 12 \left(\eta _1-3 \eta _2\right) + 36 \eta _3+9 \eta _4+3 \eta _6+12 \eta _7+2 \eta _8 \right] f'' f^{(4)}
\nonumber
\\
&&
+4 \rn^2 \left(\eta _1-3 \eta _2\right) \big( f  f^{(6)} +3 f' f^{(5)} \big)
\big\rbrace f = 0 .
\label{eqSDGcomb}
\eea
\end{small}
While in quadratic gravity, such a combination resulted in a simple equation for the function $f(r)$ [see Eq.~\eqref{eqQGcomb}], in generic sixth-order gravity, it is considerably more complicated.

The structure of the field equations here and the eight extra parameters $\eta_{1,\ldots,8}$ make it difficult to carry out a thorough analysis of all the possible branches of solutions and models like the one performed in the previous section. Instead, our goal in this section is to explore some solutions to particular models in order to illustrate their variety and the differences to  quadratic gravity.

As a first example, notice that the field equations~\eqref{6der-eom-tt}--\eqref{6der-eom-thth} do not depend on the parameter $\eta_5$. This happens because, for a metric of the form~\eqref{SSKundtmetric}, the field equations resulting from the action
\be
\label{6der-eta5}
S = \int \dd^4 x\, \sqrt{-g}\, \eta_5 \, S_{ab} \, S^a{}_c \, S^{cb} 
\ee
are proportional to the derivatives of the constant $\rn$. As a consequence, any static spherically symmetric Kundt metric is a solution to the cubic model~\eqref{6der-eta5}.
A similar situation occurs for the particular models with
\be
\label{modtudoSDG}
\gamma = \eta_1 = \eta_2 = 0 , \quad
\alpha = 3 \beta \neq 0 , \quad
\eta_6 = -3 \eta_4 \neq 0, \quad
\eta_7 = - \frac{3}{4} ( 12 \eta_3 + \eta_4 ) , \quad
\eta_8 = \frac{9}{2} ( 8 \eta_3 + \eta_4 ) , \quad
\ee
for which the field equations~\eqref{6der-eom-tt}--\eqref{6der-eom-thth} are automatically satisfied for any $f(r)$ provided that $\rn^2=\eta_4/\beta$.

In the following Secs.~\ref{sec_lin}--\ref{SSec.Case3.6D}, we show how the geometries obtained in closed form in quadratic gravity can occur in six-derivative gravity, and afterwards, in~\ref{SSec.Case4.6D} and~\ref{SSec.Case5.6D}, we give some examples of solutions that only occur in the latter models. Accordingly, we shall only comment on the regularity of the solutions in the case of the new geometries, as the others were already considered in the previous section. We close this discussion with the identification of the possible classes of solutions expandable in power series according to the Frobenius method, in~\ref{sec_six_power}, followed by a brief note regarding (A)dS spacetime in six-derivative gravity.

\subsection{Case $f=a_0 + a_1 r $ with $a_1 \neq  0$}
\label{sec_lin}

In this case, compared to quadratic gravity, the inclusion of the six-derivative terms modifies Eq.~\eqref{CaseF2quad} to the form
\be
\label{CaseF2six}
\rn^2 =  \frac{2 ( \alpha - 3 \beta ) \pm \sqrt{4 (\alpha -3 \beta )^2-3 \gamma  \left( 36 \eta _3+9 \eta _4+3 \eta _6+12 \eta _7+2 \eta _8 \right)}}{3 \gamma } ,
\quad
\Lambda = -\frac{2 \left(\alpha -3 \beta -3 \gamma  \rn^2\right)}{9 \gamma  \rn^4} ,
\ee
Notice that such a solution might exist even if $\Lambda =0$, provided that
\be
\gamma (36 \eta _3+9 \eta _4+3 \eta _6+12 \eta _7+2 \eta _8)  = (\alpha -3 \beta )^2 .
\ee

\subsection{Case $f=a_0 + a_1 r + a_2 r^2 $ with $a_2 \neq  0$}
\label{SSec6der-2}

In this case, the field equations yield
\bea
\left(a_2 \rn^2+1\right) \left\lbrace 3 c_1 \left(a_2 \rn^2-1\right)^2 + c_2 \left[ 2 a_2 \rn^2-3 \left(a_2^2 \rn^4+1\right) \right] + 12 \rn^2  (\alpha -3 \beta ) \left(a_2 \rn^2-1\right) + 9 \gamma  \rn^4  \right\rbrace 
& = & 0,
\nonumber
\\
c_1 \left(a_2 \rn^2-1\right)^3-c_2 \left(a_2 \rn^2-1\right) \left(a_2 \rn^2+1\right)^2-9 \gamma  \rn^4 \left(a_2 \rn^2 - 1 + 2 \rn^2 \Lambda \right)
& = & 0,
\label{6der-2-sys}
\eea
where we defined
\be
\label{c1-c2}
c_1 = 2 \left( 18 \eta_3 + 6 \eta_7 + \eta_8 \right)  , \qquad c_2 = -3 \left( 3 \eta_4 + \eta_6 \right) .
\ee
This system undergoes considerable simplification if it is assumed that $a_2 \rn^2 = \pm 1$, i.e., the metric is Bertotti-Robinson or Nariai.

For instance, for the Nariai choice, $a_2 \rn^2 = -1$, Eq.~\eqref{6der-2-sys} reduces to
\be
\label{aquelaNar}
9 \gamma  \rn^4 \left(\Lambda  \rn^2-1\right) + 4 c_1 = 0 ,
\ee
which might admit real roots depending on the values of the parameters $\Lambda$, $\gamma$, and $c_1$. In particular, in the case $\Lambda = 0$, we have the solution
\be
\rn^2 =  \frac{2}{3} \sqrt{\frac{c_1}{\gamma}} , \qquad a_2 = - \frac{1}{\rn^2},
\ee
provided that $c_1 \gamma^{-1} > 0$.

On the other hand, for the Bertotti-Robinson choice, $a_2 \rn^2 = 1$, we obtain
\be
9 \gamma  \rn^4-4 c_2 = 0 , \qquad \gamma  \Lambda  \rn^6 = 0,
\ee
which can only be solved for a positive $\rn^2$ if $\Lambda = 0$ and $c_2 \gamma^{-1} > 0$, or if $\gamma=0=c_2$. In the former case, we obtain
\be
\label{B-R-SDG}
\rn^2 = \frac{2}{3} \sqrt{\frac{c_2}{\gamma}} , \qquad a_2 = \frac{1}{\rn^2}.
\ee

Therefore, in contrast to quadratic gravity, there is a large class of six-derivative gravity models that admit the Nariai spacetime as a solution even in the absence of a cosmological constant, and models for which the Bertotti--Robinson spacetime is a solution even in the absence of matter --- and for $\gamma \neq 0$. Moreover, in the latter case, such solutions do not depend on the couplings $\alpha$ and $\beta$ of the quadratic terms. In~\cite{Giacchini:2025gzw} it was shown that these solutions are the near-extreme-horizon limit of more general solutions in the so-called class $\lbrace 0,2 \rbrace$, which correspond to expansions around an extreme (double-degenerate) horizon.

There are models that admit the Bachian versions of these solutions, with $a_2 \rn^2 \neq \pm 1$. For instance, in the case $\Lambda = 0$ and $\alpha = 3 \beta$, we have the solution
\be
\rn^2 = \frac{\sqrt{2}}{3} \sqrt{\frac{c_1 c_2 \pm \sqrt{c_2^3 (2 c_1 - c_2) }}{(c_2 - c_1) \gamma }} , \qquad a_2 = -\frac{9 \gamma \rn^2}{2 c_2} ,
\ee
assuming that the parameters $c_1$ and $c_2$ satisfy the necessary constraints to guarantee that $\rn^2 > 0$. These solutions can be extended for the cases with generic $\alpha$ and $\beta$, as shown in~\cite{Giacchini:2025gzw}, and might also exist for non-vanishing $\Lambda$, depending on the values of the parameters of the model.

\subsection{Case $f=a_0 + a_1 r + a_2 r^2 + a_3 r^3$ with $a_3 \neq  0$ }
\label{SSec.Case3.6D}

In this case, Eq.~\eqref{eqSDGcomb} can only be satisfied if
\be
\label{condF3SDGb}
2 \eta _1-6 \eta _2+36 \eta _3+9 \eta _4+3 \eta _6+12 \eta _7+2 \eta _8  = 0 .
\ee
Although this condition considerably simplifies the field equations, their expansion order by order still yields a system of five equations that might have many branches of solutions depending on the possible relations between the parameters of the model. In what follows, we shall only discuss some particular solutions related to the condition $\eta_1 = 3 \eta_2$. In this context, one can verify that there exists the solution 
\be
\label{Sol.3.6D}
\rn^2 = \frac{2}{3} \sqrt{\frac{c_2}{\gamma}} , \qquad
a_1 = \frac{{a_2}^2 {\rn}^4-1}{3 a_3 {\rn}^4} ,
\ee
with $c_2$ defined in~\eqref{c1-c2},
provided that
\be
\Lambda = 0 , 
\qquad \gamma \neq 0 , 
\qquad \beta = 0 , 
\qquad \eta_1 = 3\eta _2=\frac{1}{6} \left(6 \eta _4-\eta _6\right), 
\qquad \eta _3=- \frac{1}{108} \left(3 \eta _4+4 \eta _6+12 \eta _7\right) ,
\label{condF3SDG}
\ee
in addition to~\eqref{condF3SDGb}.
Notice that the models defined by these conditions are contained in the class of models that admit the Bertotti-Robinson solution~\eqref{B-R-SDG}, with the same $\rn$. Also, while in quadratic gravity, a solution of this type only exists in Weyl conformal gravity [c.f.~\eqref{eqa1conf}], here it can occur if the Einstein-Hilbert term is present in the action.

Solutions of this type to models with cosmological constant also exist. For instance, if we keep all the conditions~\eqref{condF3SDG} but allow $\Lambda \neq 0$ and set 
$\alpha = \tfrac{1}{2} \sqrt{\gamma c_2}$, we find the solution
\be
\label{Sol.3.6Db}
\rn^2 = \frac{2}{3} \sqrt{\frac{c_2}{\gamma}} , \qquad
a_1 = \frac{a_2^2 \rn^4+2 \Lambda  \rn^2-1}{3 a_3 \rn^4} .
\ee
To close, we mention that there are six-derivative models with $\gamma = 0$ that also admit this type of solutions, but carrying out an exhaustive analysis of all such possible models is beyond the scope of this work.
We remind that the analysis of the regularity is the same as in Sec.~\ref{sec_a3}, i.e., ${r\to-\infty}$ is a geodesic singularity for the case ${a_3>0}$, and so is ${r\to\infty}$ for ${a_3<0}$.

\subsection{Case $f = a_2 r^2 + a_n r^n$ with $a_n \neq  0$ }
\label{SSec.Case4.6D}

First, let us focus on a specific family of models defined by the following relations between the coupling constants:
\be 
\label{defmodelExSDG}
\alpha = 3 \beta , \qquad
\eta_1 = 3 \eta_2 , \qquad
\eta_6 = - 3 \eta_4 , \qquad
\eta _8 = -6 (3 \eta_3 + \eta_7).
\ee
These relations might be regarded as the equivalent of the relation $\alpha = 3 \beta$ in quadratic gravity, since they make Eq.~\eqref{eqSDGcomb} to be automatically satisfied independently of the function $f(r)$ [c.f.~Eq.~\eqref{eqQGcomb}]. Moreover, in this case, the $tt$ component of the field equations reduces to
\be
\gamma  \rn^4 (\Lambda  \rn^2 -1) = 0,
\ee
which is the same of~\eqref{3.5fe00}. Therefore, like in quadratic gravity, there are two possibilities, $\gamma = 0$ or $\rn^2 = 1/\Lambda$. 
The latter case leads to the Nariai solution. So, let us assume that $\gamma = 0$.

By assuming that $f(r)$ is a polynomial function and solving the $\theta\theta$ component of the field equations order by order, one can verify that the following solutions exist, for some particular models:
\be
\label{SDG-sol-g0}
\begin{aligned}
f(r) &= a_2 r^2 + a_6 r^6 ,
& \rn^2 &= -\frac{24 \eta_2 - 5 \eta_4}{5 \beta + 76 a_2 \eta_2} ,
& \text{if} \quad
\eta_7 &= \frac{36 \eta_2}{5} - 9 \eta_3 - \frac{3 \eta_4}{4} , \\
f(r) &= a_2 r^2 + a_5 r^5 ,
& \rn^2 &= -\frac{21 \eta_2 - 5 \eta_4}{5 \beta + 39 a_2 \eta_2} ,
& \text{if} \quad
\eta_7 &= \frac{63 \eta_2}{10} - 9 \eta_3 - \frac{3 \eta_4}{4} , \\
f(r) &= a_2 r^2 + a_4 r^4 ,
& \rn^2 &= -\frac{10 \eta_2 - 3 \eta_4}{3 \beta + 8 a_2 \eta_2} ,
& \text{if} \quad
\eta_7 &= 5 \eta_2 - 9 \eta_3 - \frac{3 \eta_4}{4} , \\
f(r) &= a_2 r^2 + a_3 r^3 ,
& \rn^2 &= \frac{\eta_4 - 2 \eta_2}{\beta} ,
& \text{if} \quad
\eta_7 &= 3 \eta_2 - 9 \eta_3 - \frac{3 \eta_4}{4} .
\end{aligned}
\ee
These are particular cases of the solution
\be
\label{SolEx4N}
f(r) = a_2 r^2 + a_p r^p , \quad \rn^2 = - \frac{4 (p-2) (2 p-3) \eta _2 - (p-1) p \eta _4 }{(p-1) p \beta + (p-3) (p-2) \left(p^2+p-4\right) a_2 \eta _2 } , \quad p\in\mathbb{R}\setminus \{0,1\} , 
\ee
that exists for the model with
\be \label{Exa2+an}
\gamma = 0 , \quad
\alpha = 3 \beta , \quad
\eta_1 = 3 \eta_2 , \quad
\eta_6 = - 3 \eta_4 , \quad
\eta _7 = \frac{6 (p-2) (2 p-3) \eta _2 }{(p-1) p}  -9 \eta _3-\frac{3 \eta _4}{4}, \quad
\eta _8 = -6 (3 \eta_3 + \eta_7)  .
\ee
Notice that if $\eta_2=0$ then all solutions of the form $f(r) = a_2 r^2 + a_p r^p$ would exist for the same model; indeed, in this case we meet the family of models described by Eq.~\eqref{modtudoSDG}, for which any $f(r)$ is a solution.

The solutions described above also exist for $a_2=0$, so that the leading exponent becomes~$p$. In this case, it is possible to extend the family of models that admit this type of solution by relaxing the condition $\gamma=0$. Indeed, given any $p\in \mathbb{R}\setminus\lbrace0,1,\frac{3}{2},2,\frac{1}{10} (\sqrt{41}\pm21)\rbrace$, the model with
\begin{subequations} \label{Exn=n}
\begin{eqnarray}
    \alpha &=& 9 \eta_4 \Lambda -\frac{\gamma  (p-1) (3 p-4) p^2}{36 \Lambda  (p-2) (2 p-3) [p (5 p-21)+20]}-\frac{12 \eta_1 \Lambda  (p-2) (2 p-3)}{(p-1) p} , 
    \\
    \beta &=& -\frac{\gamma }{6 \Lambda }+3 \eta_4 \Lambda -\frac{\gamma  (p-1) (3 p-4) p^2}{108 \Lambda  (p-2) (2 p-3) [p (5 p-21)+20]}-\frac{4 \eta_1 \Lambda  (p-2) (2 p-3)}{(p-1) p} ,
    \\
    \eta_2 &=& \frac{\eta_1}{3}+\frac{\gamma  (p-1) p (2 p-5)}{54 \Lambda ^2 (p-2) [p (5 p-21)+20]},  
    \\
    \eta_6 &=& -3\eta_4, 
    \\
    \eta_7 &=& -9 \eta_3-\frac{3\eta_4}{4}+\frac{\gamma  (7 p-15) [13 (p-4) p+48]}{216 \Lambda ^2 (p-2) [p (5 p-21)+20]}+\frac{2 \eta_1 (p-2) (2 p-3)}{(p-1) p},
        \\
    \eta_8 &=& 36 \eta_3+\frac{9 \eta_4}{2}+\gamma\frac{  p [(373-61 p) p-744]+480}{36 \Lambda ^2 (p-2) [p (5 p-21)+20]}-\frac{12 \eta_1 (p-2) (2 p-3)}{(p-1) p},
\end{eqnarray}
\end{subequations}
admits the solution
\begin{equation}
\label{Sol-Exn=n}
  f(r) = a_p r^p , \quad \rn^2 = \frac{1}{3\Lambda }.  
\end{equation}
If $\gamma = 0$, the model~\eqref{Exn=n} reduces to~\eqref{Exa2+an}, and the solution~\eqref{Sol-Exn=n} coincides with~\eqref{SolEx4N} with $a_2=0$. 

To close this example, let us mention that there are models that do not satisfy the conditions~\eqref{defmodelExSDG} and yet admit solutions of the form $f = a_2 r^2 + a_p r^p$ with $a_2\neq 0$ and $p>3$. 
Without assuming any of the special relations that reduce the order of field equations [such as~\eqref{modtudoSDG}, \eqref{defmodelExSDG} or simply $\eta_1=3\eta_2$], we find, for example, that the model with
\begin{equation}
\begin{gathered}
    \eta_1= -69 \eta_2 = \frac{391 (\alpha-3 \beta )^2}{2214 \gamma },  
    \quad\eta_4= \frac{2 (104 \alpha-5847 \beta ) (\alpha-3 \beta )}{16605 \gamma }, 
    \\
    \quad\eta_6= \frac{2 (\alpha-3 \beta ) (1201 \alpha+1932 \beta )}{5535 \gamma }, 
    \quad \eta_7= -\frac{(1987 \alpha-22566 \beta ) (\alpha-3 \beta )}{33210 \gamma }-9 \eta_3,
    \\
    \quad \eta_8= -\frac{(\alpha-3 \beta ) (4648 \alpha+2661 \beta )}{5535 \gamma } +36 \eta_3, 
    \quad\Lambda = -\frac{8883 \gamma }{1312 (\alpha-3 \beta )}
\end{gathered}
\end{equation}
admits the solution
\be
f(r) = \frac{15 \gamma  r^2}{8 (\alpha-3 \beta )} + a_4 r^4 , \quad \rn^2 = -\frac{2 (\alpha -3 \beta )}{3 \gamma }.
\ee

Regarding the regularity of the solutions discussed in this section, one can see that, for ${p>2}$, the polynomial curvature invariants diverge at ${r\to\pm\infty}$ (if it is defined there), since ${f''=2a_2 + a_p (p-1) p r^{p-2}}$. The affine-parameter integral \eqref{eq:affpar} converges, i.e., the singularity can be reached by geodesics at finite affine parameter, for real ${p>2}$ at ${r=\infty}$ if ${a_p<0}$ and at ${r=-\infty}$ (if defined there) if ${p}$ is an even integer for ${a_p<0}$ or $p$ is an odd integer for ${a_p>0}$. On the other hand, if ${p<2}$, then the polynomial curvature invariants diverge at ${r=0}$, in which case there is always a geodesic singularity.

\subsection{Case $f = a_0+a_1 r+ a_2 r^2 + c (r+b)^{5/2}$ with $c \neq  0$ }
\label{SSec.Case5.6D}

For models such that $\eta_1=3\eta_2$, Eq.~\eqref{eqSDGcomb} can be regarded as second-order differential equation for the function $f''(r)$, whose solution can be found in closed form, resulting in
\be
f(r) = a_0 + a_1 r - \frac{1}{30 x_1} \left[ 15 x_2 r^2 \pm \frac{2 \left[x_2^2 + 2 x_1 a_2 (r+a_3)\right]^{5/2}}{x_1^2 a_2^2} \right] ,
\ee
where
\begin{eqnarray}
x_1 & = & \left(36 \eta _3+9 \eta _4+3 \eta _6+12 \eta _7+2 \eta _8\right) \rn^2,
\\
x_2 & = & 4 \rn^2 (\alpha -3 \beta ) -72 \eta _3+6 \eta _4+2 \eta _6-4 \left(6 \eta _7+\eta _8\right).
\end{eqnarray}
It can then be verified that the $tt$-component of the field equations imposes a relation between the otherwise independent constants $a_{1,2,3}$, while further constraints will be derived from the $\theta\theta$ equation. Therefore, by taking the ansatz
\be
f(r) = a_0+a_1 r+ a_2 r^2 + c (r+b)^{5/2} , \label{eq:ansatz52}
\ee
we identified the solution 
\begin{equation}
    f(r)= -\frac{96 \gamma  (r+b)^2}{19 (\alpha -3 \beta )} + c (r+b)^{5/2}, \quad  \rn^2= \frac{19 (\alpha -3 \beta )}{96 \gamma },
\end{equation}
for the model with
\begin{equation}
\begin{gathered}
    \eta_1= 3 \eta_2,  \quad \eta_8= \frac{19 (\alpha -3 \beta ) (\alpha +141 \beta )}{3072 \gamma } -\frac{27 \eta_2}{8}+36 \eta_3, \quad \eta_7= \frac{19 (7 \alpha -165 \beta ) (\alpha -3 \beta )}{18432 \gamma } +\frac{9 \eta_2}{16}-9 \eta_3,
    \\
    \quad\eta_6= \frac{19 (31 \alpha -237 \beta ) (\alpha -3 \beta )}{4608 \gamma } -\frac{83 \eta_2}{20}, \quad\eta_4= -\frac{19 (7 \alpha -165 \beta ) (\alpha -3 \beta )}{13824 \gamma } + \frac{83 \eta_2}{60}, \quad\Lambda = -\frac{224 \gamma }{361 (\alpha -3 \beta )}.
\end{gathered}
\end{equation}

Also, for models with  $\eta_1=\eta_2=0$ we found
\begin{equation}
    f(r)=-\frac{8 \gamma }{\alpha -3 \beta } (r+b)^2+c (r+b)^{5/2},  \quad \rn^2=-\frac{4 (\alpha -3 \beta )}{13 \gamma },
\end{equation}
for the theory
\begin{equation}
\begin{gathered}
    \eta_1=\eta_2=0, \quad \eta _8=\frac{(\alpha -3 \beta ) (7 \alpha -93 \beta )}{52 \gamma }+36 \eta _3, \quad\eta _7=-\frac{(\alpha -3 \beta ) (19 \alpha -129 \beta )}{312 \gamma }-9 \eta _3, 
    \\
    \eta _6=\frac{(\alpha -3 \beta ) (\alpha +69 \beta )}{78 \gamma }, \quad \eta _4=\frac{(\alpha -3 \beta )(11 \alpha -105 \beta ) }{234 \gamma }, \quad  \Lambda =-\frac{247 \gamma }{24 (\alpha -3 \beta )}.
\end{gathered}
\end{equation}
Another similar one,
\begin{equation}
    f(r)=-\frac{16 \gamma  }{17 (\alpha -3 \beta )}(r+b)^2+c (r+b)^{5/2}, \quad \rn^2=\frac{17 (\alpha -3 \beta )}{16 \gamma } ,
\end{equation}
which solves
\begin{equation}
\begin{gathered}
    \eta_1=\eta_2=0, \quad \eta _8=-\frac{17 (\alpha -3 \beta ) (7 \alpha -93 \beta ) }{256 \gamma }+36 \eta _3, \quad\eta _7=\frac{17 (\alpha -3 \beta )(19 \alpha -129 \beta ) }{1536 \gamma }-9 \eta _3, 
    \\
    \eta _6=-\frac{17 (\alpha -3 \beta ) (\alpha +69 \beta )}{384 \gamma }, \quad\eta _4=-\frac{17 (\alpha -3 \beta )(11 \alpha -105 \beta ) }{1152 \gamma }, \quad \Lambda =\frac{304 \gamma }{867 (\alpha -3 \beta )}.
\end{gathered}
\end{equation}
Note that these solutions are for models with $\gamma \neq 0$ and, therefore, were not covered by the first example of Sec.~\ref{SSec.Case4.6D}.

Similar solutions also exist for models without a cosmological constant, for example,
\begin{equation}
    f(r)= \frac{3 \gamma  (r+b)^2}{2 (\alpha -3 \beta )} + c (r+b)^{5/2}, \quad  \rn^2= \frac{2 (\alpha -3 \beta )}{3 \gamma }
\end{equation}
solves
\begin{equation}
\begin{gathered}
    \eta_1= 3 \eta_2,  \quad \eta_8= -\frac{(37 \alpha -516 \beta ) (\alpha -3 \beta )}{135 \gamma }-\frac{249 \eta_2}{40}+36 \eta_3, \quad \eta_7= \frac{(361 \alpha -1893 \beta ) (\alpha -3 \beta )}{1620 \gamma }+\frac{83 \eta_2}{80}-9 \eta_3,
    \\
    \quad\eta_6= \frac{2 (2 \alpha -33 \beta ) (\alpha -3 \beta )}{27 \gamma }-\frac{9 \eta_2}{4}, \quad\eta_4= -\frac{(13 \alpha -93 \beta ) (\alpha -3 \beta )}{81 \gamma } + \frac{3 \eta_2}{4}, \quad\Lambda = 0.
\end{gathered}
\end{equation}

Considering the solutions in the form \eqref{eq:ansatz52}, we find ${f''=2 a_2+15 c \sqrt{r+b}/4}$, meaning that the polynomial scalar invariants blow up for ${r\to\infty}$. Using \eqref{eq:affpar}, this limiting value can be reached by causal geodesics for ${c<0}$ (unless ${\epsilon=L=0}$). Therefore, ${r\to\infty}$ is a geodesic singularity for ${c<0}$.

\subsection{Power series solutions }
\label{sec_six_power}

So far, we have presented several exact, closed-form solutions to six-derivative gravity in the form of a polynomial with either integer or fractional powers. In this section, we would like to present a different approach: finding solutions as a power-series expansion, as in Sec. \ref{sec_alpha3b}.
For this purpose, we expand each equation in powers of $\Delta\equiv r-r_0$ around an arbitrary point $r_0$ using
\be
f=\Delta^n\sum_{i=0}^\infty a_i\Delta^i,
\qquad a_0\neq 0
\label{six_power_f}
\ee
and compare the leading terms, which restrict the possible values of the leading power $n$.

Let us start with  the $E_{rr}$ field equation \eqref{6der-eom-rr}, which,  for clarity, can be written as
\be
\label{eq_rr44}
q_0
+q_1 \big( f''^2  - 2 f' f^{(3)} \big)
+ q_2 f''\big( f''^2 - 3   f'   f^{(3)} \big)  
+q_3 \left[  4  f'^2 f^{(4)} + \big( f^{(3)2}  -2 f''  f^{(4)}  +2 f' f^{(5)}  \big) f \right] 
= 0\,,
\ee
where
\bea
q_0\rovno 4\left[
9 \rn^4 \gamma \left(1-\Lambda  \rn^2\right) 
-6 \rn^2 (\alpha -3 \beta ) 
+3 \left(12 \eta _3+3 \eta _4+\eta _6+4 \eta _7\right)+2 \eta _8\right]\,,\\
q_1\rovno 3 \rn^4 \left[ 2 \rn^2 (\alpha -3 \beta ) -36 \eta _3+3 \eta _4+\eta _6-2(6 \eta _7 +\eta _8)  \right] 
\,,\\
q_2\rovno \rn^6 \left[3(12 \eta _3+3 \eta _4+ \eta _6+4 \eta _7)+2 \eta _8\right] \,,\\
q_3\rovno - 6  \rn^6 \left(\eta_1 - 3 \eta_2 \right)
\,.
\eea
Using \eqref{six_power_f}, we get the expansion of Eq. \eqref{eq_rr44},
\bea
&&q_0+q_1\sum_{l=2n-4}^\infty \Delta^l
\sum_{i=0}^{l-2n+4}a_i a_{l-2n+4-i}(n+i)
(n+i-1)(l-n-i+4)(l-3n-3i+7)\nonumber\\
&&+\sum_{l=3n-6}^\infty\Delta^l
\sum_{j=0}^{l-3n+6}\sum_{i=0}^j
a_ia_{j-i}a_{l-j-3n+6}
(n+i)(n+i-1)(n+j-i)\nonumber\\
&& \times \left\{ q_2 (n+j-i-1)(l-2n-j+6)
(l-5n-j-3i+11)\right.\nonumber\\
&& \left. \ \ +q_3(n+i-2)[(n+j-i-1)(n+j-i-2)
+(n+i-3)(4l-8n-6j+4i+18)]\right\} =0\,.
\label{Errsum}
\eea

Comparing the leading powers, we obtain three possibilities:
\begin{itemize}
    \item 
 1) $n>2$: Although the first term is initially leading, 
 Eq. \eqref{Errsum} implies that $q_0=0$. As a result, the second term (proportional to $q_1$) becomes the leading term and \eqref{Errsum} yields the following cases:
\begin{itemize}
    \item[$\ast$] Case 1a)
$n=3$.
\item[$\ast$]
Case 1b) 
$q_1=0$: In this special case, where $q_1$ vanishes, the last term in Eq. \eqref{Errsum} becomes the leading one and \eqref{Errsum} implies
\be
\label{cond_1a}
q_2 n(n-1)(2n-5)-q_3 (n-2)(5n^2-21n+20)=0\,.
\ee
\end{itemize} 
\item
 2) $n=2$: In this case, Eq. \eqref{Errsum} implies
\be
q_0+4q_1 a_0^2+8q_2 a_0^3=0.
\ee
\item
 3) $n<2$: In this case, Eq. \eqref{Errsum} implies $q_0=0$  and three subcases follow from \eqref{Errsum}:
 \begin{itemize}
     \item[$\ast$] 3a) $n=1$.
     \item[$\ast$] 3b) $n=0$.
\item[$\ast$] 
3c) 
 $q_2n(n-1)
(5-2n)+q_3(n-2)(5n^2-21n+20)=0$.
\end{itemize}
\end{itemize}

Let us now focus on the field equation $E_{\theta\theta}$.
It can be written as
\bea
&&\sigma_0+\sigma_1 f''+\sigma_2 f''^2
+\sigma_3\big( f' f^{(3)} + f f^{(4)} \big)
+\sigma_4 f''^3
+\sigma_5 f f^{(3)2}
+\sigma_6 \left[  f f^{(3)2} + \big( f' f^{(3)}  + f f^{(4)}  \big) f'' \right]
\nonumber\\
&&
+\sigma_7 \left[ 2 f'^2 f^{(4)} + \big(  f'' f^{(3)}  + 4 f  f^{(5)} \big) f' + \big( 3  f''  f^{(4)}  + f f^{(6)} \big) f \right] = 0,
\eea
where
\bea
\sigma_0\rovno
-8\left[9 \rn^6 \gamma  \Lambda  -6 \rn^2 (\alpha - 3 \beta ) +2 \left(36 \eta _3+9 \eta _4+3 \eta _6+12 \eta _7+2 \eta _8\right)\right]\,,\\
\sigma_1\rovno
-12 \rn^2 \left[ 3 \rn^4 \gamma  -36 \eta _3+3 \eta _4+\eta _6-12 \eta _7 -2\eta _8  \right]
\,,\\
\sigma_2\rovno
-12 \rn^6 (\alpha -3 \beta )\,,\\
\sigma_3\rovno
12 \rn^4 \left[ 2 \rn^2 (\alpha +6 \beta ) -4 (\eta _1-3\eta _2)+72 \eta _3-6 \eta _4+\eta _6-2 \eta _8 \right] \,,\label{sig3}\\
\sigma_4\rovno
-\rn^6 \left(36 \eta _3+9 \eta _4+3 \eta _6+12 \eta _7+2 \eta _8\right)
=-q_2\,,\\
\sigma_5\rovno
36  \rn^6 \left(\eta _1+3 \eta _2\right) \,,\\
\sigma_6\rovno
-6  \rn^6 \left(72 \eta _3+6 \eta _4-\eta _6-2 \eta _8\right) \,,\\
\sigma_7\rovno
24 \rn^6 \left(\eta _1+6 \eta _2\right)\,.
\eea
Inserting the series expansion for $f$ gives
\bea
&&\sigma_0+\sigma_1 \sum_{l=n-2}^{\infty}
\Delta^l a_{l-n+2}(l+2)(l+1)\nonumber\\
&&+\sum_{l=2n-4}^\infty\Delta^l
\sum_{i=0}^{l-2n+4}a_i a_{l-2n-i+4}
(n+i)(n+i-1)\big[
\sigma_2 (l-n-i+4)(l-n-i+3)\nonumber\\
&&\qquad \qquad \qquad\qquad +\sigma_3 (n+i-2)\left(
l+1\right)\big]\nonumber\\
&& +\sum_{l=3n-6}^\infty \Delta^l
\sum_{j=0}^{l-3n+6}\sum_{i=0}^j
a_i a_{j-i}a_{l-j-3n+6}
(n+i)(n+i-1)\nonumber\\
&&\qquad \times
\Big[
(n+j-i)(n+j-i-1)[\sigma_4(l-2n-j+6)
(l-2n-j+5)\nonumber\\
&&\qquad \qquad \qquad\qquad\qquad\qquad +
(\sigma_5+\sigma_6)(n+i-2)(n+j-i-2)]
\nonumber\\
&&\qquad 
+(n+i-2)(n+j-i)(n+j-i-1)
[\sigma_6(l-n+i-j+3)
+\sigma_7(l+n+3i-j-3)]
\nonumber\\
&&\qquad 
+\sigma_7 (n+i-2)(n+i-3)[
(n+j-i)(2l-2j+4i-4)+
(n+i-4)(n+i-5)]
\Big]=0.
\label{Eththsum}
\eea

Again, from the leading powers in the expansion, there are three possibilities:
\begin{itemize}
    \item 
i) $n>2$: the first term is leading, however, Eq. \eqref{Eththsum} implies 
$\sigma_0=0$ and thus the second becomes dominant. Then  \eqref{Eththsum}  gives $\sigma_1=0$. Finally, taking into account the third term, we get
\be
\sigma_0=0=\sigma_1, \ \ n (n-1)\sigma_2+(n-2)(2n-3)\sigma_3=0\, .
\label{cond_i}
\ee
\item ii) In the $n=2$ case, \eqref{Eththsum} gives
\be
\sigma_0+2\sigma_1 a_0+4\sigma_2 a_0^2+8\sigma_4 a_0^3=0.
\ee
\item
 iii) Finally, for  $n<2$, the last term is leading, and \eqref{Eththsum} gives three subcases
 \begin{itemize}
     \item[$\ast$] 
iiia) $n=1$,
\item[$\ast$] 
iiib) $n=0$, 
\item[$\ast$] 
iiic) 
\be
n(n-1)\{ n(n-1) \sigma_4
+(n-2)[(n-2) \sigma_5
+\sigma_6  (3n-5)]\}
+\sigma_7 (n-2)^3 (11 n -15)
=0\,. 
\ee
\end{itemize}
\end{itemize}

Combining these results we conclude that there are six classes of solutions compatible with
\eqref{Errsum} and \eqref{Eththsum}:
\begin{itemize}
    \item  
1a)
 $n=3$, $q_0=\sigma_0=\sigma_1=0$, $\sigma_3=-2\sigma_2$.
 \item  1b) 
$n>2$, $q_0=q_1=\sigma_0=\sigma_1=0$, also  \eqref{cond_1a}, \eqref{cond_i}, i.e.,
\begin{subequations}
\label{FinalCond1a}
\bea
n (n-1)\sigma_2+(n-2)(2n-3)\sigma_3\rovno 0\,,\\
q_2 n(n-1)(2n-5)-q_3 (n-2)(5n^2-21n+20)\rovno 0\,.
\eea
\end{subequations}
  \item  
 2)
 $n=2$ and
\begin{subequations}
\label{cond2IIii}
\bea
q_0+4q_1 a_0^2+8q_2 a_0^3&=& 0\,,
\label{cond2IIiia}\\
\sigma_0+2\sigma_1 a_0+4\sigma_2 a_0^2+8\sigma_4 a_0^3&= &0\,.
\eea
\end{subequations}
 \item   
3a)
$n=1$.
 \item   3b)
 $n=0$.
  \item  
 3c)
 $n<2$ and
 \begin{subequations}
 \label{FinaCond3c}
\bea
\label{483}
q_2n(n-1)
(2n-5)-q_3(n-2)(5n^2-21n+20)\rovno 0\,,\\
n(n-1)\{ n(n-1) \sigma_4
+(n-2)[(n-2) \sigma_5
+\sigma_6  (3n-5)]\}
+\sigma_7 (n-2)^3 (11 n -15)
\rovno 0\,. 
\eea
\end{subequations}
\end{itemize}

In this paper, we do not examine the higher orders of the expanded field equations, which might provide further constraints and branching among the above classes. 

In Appendix \ref{sec_con}, 
the connection between exact and power-series solutions of six-derivative gravity
is briefly discussed, with the purpose of showing that there exist solutions in all the above classes; the result is summarized in Table~\ref{table6der}. Last but not least, we also mention that more general solutions in the form of Puiseux series with fractional increments also exist, as the example in Sec.~\ref{SSec.Case5.6D} illustrates.

\vspace{2mm}
\begin{small}
\begin{table}[h]
		\begin{center}
			\begin{tabular}{|c|c|c|}
				\hline
Leading exponent & Conditions & Example in Sec.  \\[1mm] \hline\hline
$n=3$ & $q_0=\sigma_0=\sigma_1=0$, $\sigma_3=-2\sigma_2$ & \ref{SSec.Case3.6D} \\
$n>2$ & $q_0=q_1=\sigma_0=\sigma_1=0$ and  \eqref{FinalCond1a} & \ref{SSec.Case4.6D}\\
$n=2$ & \eqref{cond2IIii} & \ref{SSec6der-2}, \ref{SSec.Case3.6D}, \ref{SSec.Case4.6D} \\
$n<2$ & \eqref{FinaCond3c} & \ref{SSec.Case4.6D}\\
$n=1$ & none & \ref{sec_lin}, \ref{SSec6der-2}, \ref{SSec.Case3.6D} \\
$n=0$ & none & \ref{sec_lin}, \ref{SSec6der-2}, \ref{SSec.Case3.6D} \\
\hline
\end{tabular}\\[2mm]
\end{center}
	\caption{Classification of the closed-form solutions of Secs.~\ref{sec_lin}--\ref{SSec.Case4.6D} into the indicial classes of power series solutions.}
\label{table6der}
\end{table}
\end{small}

\subsection{(A)dS spacetime in six-derivative gravity}
\label{sec_six_AdS}

Having mentioned that the only Kundt spacetimes compatible with the metric~\eqref{SSexpandingmetric} are (A)dS and Minkowski spacetimes (see App.~\ref{AppC}), it is worth recalling that in six-derivative gravity, it can happen that (A)dS is a solution even without a cosmological constant in the action~\cite{Giacchini:2025gzw}. More generally, the model~\eqref{action-6der} admits the solution
\be
	\dd s^2= - \left( 1-\tfrac{\Lambda_{\text{eff}}}{3} r^2 \right) \dd t^2+ \left( 1-\tfrac{\Lambda_{\text{eff}}}{3} r^2 \right)^{-1} \dd  r^2+ r^2
     ( \dd \theta^2+\sin^2\theta\dd\phi^2),
	\ee
where the effective cosmological constant $\Lambda_{\text{eff}}$ satisfies
\be
\gamma (\Lambda_{\text{eff}}-\Lambda) - 16 \eta_3 \Lambda_{\text{eff}}^3 = 0.
\ee
Hence, if $\Lambda=0$ and $\gamma/\eta_3> 0$, then $\Lambda_{\text{eff}}=\pm \frac{1}{4} \sqrt{\gamma/\eta_3}$.


\section{Gravitational waves on static spherically symmetric Kundt background}
\label{sc:gravwaves}

For studying gravitational waves on static spherically symmetric Kundt backgrounds, let us use the static spherically symmetric Kundt metric in a form that is more suitable for studying gravitational waves.
Using the transformation  
\bea
t&=&\rn u-\rn \int\! \frac{\dd \rr}{\H(r)} \\
 r&=&\rn \rr\label{tr_conf_spher}
\eea
with
\be
f=-\H,\label{tr_f_H}
\ee
the metric \eqref{SSKundtmetric} transforms to
\be
	\dd s^2 = \rn^2
	\Big[ \dd \theta^2+\sin^2\theta\dd\phi^2 -2\,\dd u\,\dd \rr+\H(\rr)\,\dd u^2 \,\Big]\,.\label{metricconfK}
	\ee
 This metric will be used as a background for gravitational waves $h(u,\theta,\phi)$
\be
\dd s^2 \equiv \rn^2 \dd s^2_K=\rn^2
\Big[ \dd \theta^2+\sin^2\theta\dd\phi^2 -2\,\dd u\,\dd \rr+(\H(\rr)+h(u,\theta,\phi))\,\dd u^2 \,\Big]\,.
\label{QGwaves}
\ee    
Note that 
\be
{\Box_K h}=\BoxK h =  \frac{1}{\sin \theta} \partial_\theta\left(\sin \theta \partial_\theta h\right)+\frac{1}{\sin ^2 \theta } \partial_{\phi}^2 h
\ee
is in fact a Laplace on the unit homogeneous 2-sphere $\mathrm{d} \theta^2+\sin ^2 \theta \mathrm{d} \phi^2$. In what follows, we analyze such exact gravitational-wave solutions in quadratic gravity as well as in six-derivative gravity with the solutions found in the previous sections serving as background. The resulting equations for $h$ become polynomial Laplace, Poisson, or heat equations on the 2-sphere, whose general regular solutions are discussed in App.~\ref{sc:Pboxeq}.


\subsection{Quadratic gravity }

Let us start with studying gravitational waves on static spherical Kundt backgrounds in quadratic gravity. We show that
\begin{lemma}
Let us consider vacuum quadratic gravity backgrounds \eqref{metricconfK}.
To have a nontrivial solution $h(u,\theta,\phi)$ (i.e., not $h=h(u)$)\footnote{{ \label{footnotehu} The function $h=h(u)$ can be transformed away via the coordinate transformation $r=\tilde r + g(u)$, where $h(u)=-2g'(u)$.}} to the quadratic gravity field equations for metric \eqref{QGwaves}, the metric function $\H $  has to be at most  quadratic in $\rr$, i.e.,
\be
\H = h_2 \rr^2+h_1 \rr+h_0. \label{Hquadr}
\ee
Then the field equations reduce to (apart from a possible algebraic relation among $h_2$, $\Lambda$, and parameters of the particular theory)
\be 
(\BoxK+{\cal K} )\BoxK h=0\, \label{uu_wave_Gen} 
\ee
for some constant ${\cal K}$.\footnote{In the context of almost universal spacetimes, Eq. \eqref{uu_wave_Gen} was also derived in \cite{Kuchynkaetal19}; its non-local analogues appearing in infinite-derivative gravity were solved in \cite{Kolar:2021rfl}.}
\end{lemma}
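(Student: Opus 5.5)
Substitute the wave ansatz \eqref{QGwaves} into the field equations \eqref{EqQG4d}. This is a pp-wave-type (Kundt) deformation of the product background, so the computation is structured. With $\ell=\partial_\rr$ the Kundt vector, the metric \eqref{QGwaves} is $\rn^2$ times a Kundt metric whose only $u$- and transverse-dependence sits in the $g_{uu}$ component $\H(\rr)+h(u,\theta,\phi)$. Since $h$ does not depend on $\rr$, the curvature splits into a background part and a single boost-weight $-2$ part. The background part is built from $\H''$ and $\rn$, and its curvature invariants are unchanged. The boost-weight $-2$ part is linear in $h$ and involves only transverse derivatives $\BoxK h$ and $\nabla_i\nabla_j h$. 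In particular $R$ is the same as for the background, so every trace-type component and every component of boost weight $\geq -1$ of the field equations reproduces exactly the background equations. By assumption these are already satisfied by the vacuum background found in Sec.~\ref{sc:QG}.

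The only new equation is therefore the $uu$ component $E_{uu}$, of boost weight $-2$. The $h$-dependent part of $E_{uu}$ is linear in $h$ because the wave terms are null and quadratic products vanish. I would compute it using \eqref{BachRicci} for the Bach tensor together with the $\beta$-terms. The result should have the form
\[
c_2(\rr)\,\BoxK\BoxK h + c_1(\rr)\,\BoxK h + c_0(\rr)\,h = 0,
\]
where the coefficients involve $\H''$, $\H'''$ and $\H''''$, as in \eqref{FEQEtt}--\eqref{eqEthth}. The term $\nabla_u\nabla_u R$ does not contribute, since $R$ is unchanged. Terms of the type $\H'\partial_\rr$ acting on $h$ vanish, since $\partial_\rr h=0$.

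The key step is the separation in $\rr$. Because $h$ is independent of $\rr$, the above identity must hold separately for each independent function of $\rr$ multiplying it. My expectation is that the $\BoxK\BoxK h$ coefficient is the constant $\alpha$ (up to normalization, from $\tfrac12\Box R_{ab}$), while the lower coefficients contain terms like $\H'''$ times a constant and possibly $(\alpha-3\beta)\H''''$. A nontrivial $h$ means $\BoxK h\neq 0$: a harmonic $h$ on the compact $S^2$ that is regular must be a function of $u$ only, so $h$ reduces to the gauge-removable $h(u)$ of footnote~\ref{footnotehu}. I would then argue that the requirement $\H'''=0$ is forced. If $\H'''\neq0$, the $\rr$-dependent coefficient multiplying $\BoxK h$ (or $h$) must vanish identically. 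If that is not automatic, it forces $\BoxK h=0$, contradicting nontriviality. In the sub-branches where the background is non-polynomial ($\alpha=3\beta$, Sec.~\ref{sec_a3b_QG}) or cubic (Sec.~\ref{sec_a3}, where $\beta=\gamma=0$), I would check case by case that the coefficient of $\BoxK h$ contains a nonconstant multiple of $\H'''$ with a nonzero prefactor. This is the main obstacle: making sure that no cancellation between $\alpha$, $\beta$ and $\gamma$ terms rescues a cubic or non-polynomial $\H$. The Weyl-gravity cubic case, where only the Bach tensor contributes, is the delicate one. There I would compute $B_{uu}$ explicitly and isolate the term proportional to $\H'''\,\BoxK h$ or $\H'\H'''$.

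Once $\H$ is quadratic, all coefficients become constant. The term $c_0h$ must vanish, either by the background equations or because a term in $h$ alone can only come multiplied by derivatives of $\H''$, which are now zero. Dividing by the leading coefficient then gives $(\BoxK+{\cal K})\BoxK h=0$, with ${\cal K}$ built from $h_2$, $\rn$, $\gamma$, $\alpha$ and $\beta$. Any leftover constant-coefficient condition is the "possible algebraic relation" mentioned in the statement. In the degenerate case where the leading coefficient vanishes, the equation becomes $\BoxK h=0$ times a constant. This is again of the stated form in a limiting sense, or else it only admits trivial $h$.
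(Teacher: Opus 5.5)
\textbf{There is a genuine gap.} Your structural claim that only $E_{uu}$ carries $h$ is false whenever $\H'''\neq0$, and that is exactly the regime you need to exclude. It is true that the Riemann tensor of \eqref{QGwaves} has only boost-weight $0$ and $-2$ parts. However, the field equations \eqref{EqQG4d} contain $\nabla_a\nabla_b R$ and $\Box R_{ab}$, and derivatives of the $\rr$-dependent background curvature pick up $h$ through the connection:
\begin{itemize}
\item Already $\Gamma^{\rr}_{u\theta}=-\tfrac12 h_{,\theta}$ gives $\nabla_u\nabla_\theta R=\tfrac12 h_{,\theta}\,\partial_{\rr} R\propto h_{,\theta}\H'''$. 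So the boost-weight $-1$ components $E_{u\theta}$, $E_{u\phi}$ are not background equations; they read $(2\alpha+3\beta)\H'''h_{,\theta}=0=(2\alpha+3\beta)\H'''h_{,\phi}$.
\item Likewise $\Gamma^{\rr}_{uu}=\tfrac12(\H+h)\H'-\tfrac12 h_{,u}$, so $\nabla_u\nabla_u R$ does contribute terms in $h$ and $\partial_u h$ multiplied by $\H'''$.
\item Even the trace changes, since $\Box R\propto-\partial_{\rr}\big[(\H+h)\partial_{\rr}R\big]$ contains $h\H''''$.
\end{itemize}
Your ansatz $c_2\BoxK^2 h+c_1\BoxK h+c_0h=0$ for $E_{uu}$ is therefore incomplete. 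Example~2 of the six-derivative section shows the same phenomenon: a cubic background produces nontrivial $u\theta$, $u\phi$ components and $\partial_u h$ terms.

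Second, your exclusion step needs $\BoxK h\neq0$, which you get only from global regularity on $S^2$. The lemma assumes just $h\neq h(u)$. That includes locally harmonic singular profiles such as $-J(u)\log\tan(\theta/2)$, the general-relativistic wave on Nariai, and for these the $\BoxK$-terms of $E_{uu}$ say nothing about $\H'''$. Moreover, the case-by-case cancellation check you call the main obstacle is left undone.

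The missing idea is to use $E_{u\theta}$ and $E_{u\phi}$. For any $h$ with $h_{,\theta}\neq0$ or $h_{,\phi}\neq0$ they give $(2\alpha+3\beta)\H'''=0$ at once. This disposes of the Weyl-gravity cubic background ($2\alpha+3\beta=2\alpha\neq0$) and of the non-polynomial $\alpha=3\beta$ backgrounds ($2\alpha+3\beta=9\beta$, which vanishes only in Einstein gravity), with no cancellation analysis. In the remaining case $2\alpha+3\beta=0$ the background equations suffice. $E_{\rr\rr}$ gives $(\alpha-3\beta)\H''''=0$. If $\alpha=3\beta$, then $\alpha=\beta=0$ and $E_{\theta\theta}$ gives $\H'''=0$. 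If $\alpha\neq3\beta$, then $\H''''=0$, and a combination of $E_{u\rr}$ and $E_{\theta\theta}$ (with $\beta$ or $\gamma$ nonzero) forces $\H'''=0$. Your treatment of the reduced equation once $\H'''=0$ is fine, since then all components other than $E_{uu}$ do decouple from $h$.
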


\begin{proof}[Sketch of proof]

Assuming that either $h,_\theta\neq 0$ or $h,_\phi\neq 0$, the $u\theta$- and $u\phi$-components of the quadratic gravity field equations,
\bea
(2\alpha+3\beta)\H''' h,_\theta &=&0,\label{wave_02}\\
(2\alpha+3\beta)\H''' h,_\phi &=&0,\label{wave_03}
\eea
imply $(2\alpha+3\beta)\H'''=0$.  Thus, it immediately follows that $ \H'''=0  $ in the $(2\alpha+3\beta) \neq 0$ case.
Let us thus consider the $2\alpha+3\beta=0$ case. 

 The $\rr\rr$-component of the field equations implies  $(\alpha-3\beta) \H''''=0$. 
For $\alpha=3\beta$ together with $2\alpha+3\beta=0$, we get
$\alpha=\beta=0$, i.e., the quadratic gravity reduces to Einstein gravity.
From $\theta\theta$-component of the field equations,  we get $\H'''=0$.

 For $2\alpha+3\beta=0$ and $\alpha \neq 3\beta$, we immediately get $\H''''=0$. 
For the action to be nontrivial, either $\beta\neq 0$ or $\gamma\neq 0$. Then 
a linear combination of $ur$- and $\theta\theta$-components of the field equations imply $\H'''=0$. 
\end{proof}

Note that for backgrounds more general than \eqref{Hquadr}, the field equations of quadratic gravity impose overly restrictive constraints on $h$, leading typically to  $h=\mbox{const}$, which is incompatible with gravitational waves.

In general, the field equations for the metric \eqref{QGwaves} are coupled (see, e.g., \eqref{wave_02} and \eqref{wave_03}). However, when $\H'''=0$, all field equations except the $uu$-component decouple and no longer contain $h$. The $uu$-component reduces to \eqref{uu_wave_Gen}with
\be
 {\cal K}= \frac{4(\alpha-3\beta)-4 h_2(2\alpha+3\beta)-3\gamma \rn^2}{6\alpha }.
\label{case_B2}    
    \ee
     Consequently, a quadratic $\H$ is not constrained by the presence of $h$ in the metric \eqref{QGwaves}, and thus gravitational waves can propagate in all backgrounds listed in Tab. \ref{Tab_constR}.

For special values of the parameters, ${\cal K}$ simplifies further.  Unless specified otherwise, the following cases correspond to Bachian-Nariai ($h_2>0$, $h_2\neq 1$), Bachian-Bertotti-Robinson ($h_2<0$, $h_2\neq -1$), or Pleba\'nski-Hacyan ($h_2=0$):
\begin{itemize}
    \item 
    $\gamma=0$  
    \begin{itemize}
        \item 
        $\alpha=3\beta$  with ${\cal K}=-2 h_2$
\item 
$h_2=\pm 1$ ({Nariai/Bertotti-Robinson})
with  ${\cal K}=\frac{2[3\alpha-(h_2+1)(2\alpha+3\beta)]}{3\alpha }$
    \end{itemize}
     \item 
     $h_2=2\Lambda \rn^2-1$.
\begin{itemize}
    \item 
    $\Lambda \rn^2=1$ (Nariai) with ${\cal K}=-\frac{4\alpha+24\beta+3\rn^2 \gamma}{6\alpha}$
\item 
$\gamma=0$, $\Lambda=0$ (Bertotti-Robinson) with ${\cal K}=2$
\item 
$\gamma=0$, $\alpha=3\beta$  with  ${\cal K}=-2 h_2$
\item 
$8\Lambda =\frac{3\gamma}{\alpha-3\beta}$ with ${\cal K}=\frac{4(\alpha-3\beta)-3\rn^2 \gamma}{2(\alpha-3\beta)}$
\end{itemize}
\end{itemize}

Note that these cases are not mutually exclusive (e.g., Nariai belong to most of the cases above).
They are summarized in  Tab. \ref{tab_waveqg}.
 Einstein gravity appears as a limit $\alpha=0=\beta$.  Then the Einstein field equations reduce to $\BoxK h=0$, and the only vacuum background is Nariai ($h_2=1$, $\Lambda \rn^2=1$).
     Note that Pleba\'nski-Hacyan appears as a limiting case of
Bachian-Nariai or Bachian-Bertotti-Robinson with $h_2=0$.


\begin{table}[H]
	\begin{center}
		\begin{tabular}{|l|l|l|l|c|}
			\hline
			theory 
            & conditions & background& ${\cal K}$ \\[1mm] \hline\hline
$\alpha$, $\beta$, $\gamma$ arbitrary 
 &  $h_2=1 $, 
 $\Lambda \rn^2=1$  &
  Nariai 
  & $\frac{4\alpha+24\beta+3\rn^2 \gamma}{6\alpha}$\\
     $\alpha\neq  3\beta$, $\gamma\neq 0$   
        & $h_2=2\Lambda \rn^2 -1$, $\Lambda=\frac{3\gamma}{8(\alpha-3\beta)}$ &
        Bachian-Nariai/Bachian-Bertotti-Robinson 
       &$ \frac{4(\alpha-3\beta)-3\rn^2 \gamma}{2(\alpha-3\beta)}$\\
\hline
$\alpha$, $\beta$ arb., $\gamma=0$ 
& 
 $h_2=2\Lambda \rn^2 -1=-1$, $\Lambda=0$ &  
   Bertotti-Robinson  
   & $2$\\    
$\alpha$, $\beta$ arb., $\gamma=0$ 
&
   $h_2=\pm 1$  
   &  Nariai/Bertotti-Robinson
   & $\frac{2[3\alpha-(h_2+1)(2\alpha+3\beta)]}{3\alpha}$
\\
  \hline
$\alpha=3\beta$, $\gamma=0$ 
&
 $h_2=2\Lambda \rn^2 -1$& 
  Bachian-Nariai/Bachian-Bertotti-Robinson
 &  $-2 h_2$ \\
 $\alpha=3\beta$,  $\gamma=0$ 
& $h_2$ arb. 
& Bachian-Nariai/Bachian-Bertotti-Robinson
&  $-2 h_2$\\ 
$\alpha=3\beta$, $\gamma=0$ 
& $h_2=\pm 1$
&  Nariai/Bertotti-Robinson
& $\frac{2[3\alpha-(h_2+1)(2\alpha+3\beta)]}{3\alpha }$
\\
\hline
$\alpha=0=\beta$, $\gamma\neq 0$ 
& $h_2=1$, $\Lambda \rn^2=1$ 
& Nariai in Einstein gravity & \\
\hline
            \end{tabular}\\[2mm]
			\caption{Vacuum
             Kundt backgrounds in quadratic gravity compatible with gravitational waves and the metric of the form \eqref{QGwaves}. In all cases, the metric function $\H$ has the form \eqref{Hquadr} and $h$ satisfies \eqref{uu_wave_Gen}. Pleba\'nski-Hacyan appears as a limiting case of
Bachian-Nariai or Bachian-Bertotti-Robinson with $h_2=0$.}
		\label{tab_waveqg}
            \end{center}
            \end{table}


Let us discuss solutions of \eqref{uu_wave_Gen} starting with the (globally) regular ones. As shown in Appendix~\ref{sc:Pboxeq} (in a more general case of polynomial Laplace equations), the solution can be written in terms of spherical harmonics ${Y_l^m}$,
\begin{equation}\label{eq:quadgrsolh}
    h=\frac{h_0^0(u)}{\sqrt{4\pi}}+\sum_{m=-{l_1}}^{m={l_1}}h_{l_1}^m(u) Y_{l_1}^m(\theta,\phi),
\end{equation}
where the coefficients $h_l^m (u)$ are arbitrary complex functions chosen such that $h$ is real. The two terms in \eqref{eq:quadgrsolh} correspond to the separate regular solutions of $\BoxK h=0$ and $(\BoxK+\mathcal{K}) h=0$. The modes $h_{l_1}^m(u)$ can be nonzero only if ${\mathcal{K} = l_1 (l_1 + 1)}$ for some integer ${l_1 \in \mathbb{N}}$; otherwise, it vanishes. As an example, here we list a few axially-symmetric solutions (i.e, with ${m=0}$),
\begin{equation}
    h=\frac{h_0^0(u)}{\sqrt{4\pi}}+
    \begin{cases} 
     \sqrt{\frac{3}{4\pi }} h_{1}^0(u) \cos (\theta ), &\mathcal{K}=2,
    \\
    \frac{1}{2} \sqrt{\frac{5}{4\pi }} h_{2}^0(u) \left(3 \cos ^2(\theta )-1\right), &\mathcal{K}=6,
    \\
    \frac{1}{2} \sqrt{\frac{7}{4\pi }} h_{3}^0(u)\left(5 \cos ^3(\theta )-3 \cos (\theta )\right), &\mathcal{K}=12.
    \end{cases}
\end{equation}
In the case of these `resonances', which are given by a constraint on a combination of coupling constants and $\rn$, quadratic gravity admits extra nontrivial solutions corresponding to the massive gravitational waves (also in more general backgrounds), while the vacuum general relativity only admits the trivial regular solution of the massless equation (and only on the Nariai background); see \cite{Ortaggio02}, which focuses on impulsive waves.

Alongside the regular solutions, one may also construct a variety of singular solutions. Focusing on the solutions that correspond to two distributional $\delta$-sources localized at the poles ${\theta=0}$ and ${\theta=\pi}$, one can first notice that the even-parity source (with respect to ${\theta\to\pi-\theta}$) is inconsistent due to the root ${\Delta=0}$. This root introduces a zero mode, so the operator is not invertible, and the equation requires a compatibility condition on the source. Nevertheless, we can construct solutions for the odd-parity source satisfying
\begin{equation}\label{eq:oddsource}
   (\BoxK+\mathcal{K})\BoxK h=\mathcal{J}_{-}, \quad \mathcal{J}_{-}=J(u)\big[\delta(\cos\theta-1)-\delta(\cos\theta+1)\big].
\end{equation}
As shown in Appendix~\ref{sc:Pboxeq} (for general polynomial Laplace equations), the solution takes the form of a sum corresponding to the roots ${\Delta=0}$ and ${\Delta=-\mathcal{K}}$,
\begin{equation}\label{eq:GF}
    h =\frac{1}{\mathcal{K}}\left(h_{(0)}-h_{(\mathcal{K})}\right),
\end{equation}
provided ${h_{(0)}}$ and $h_{(\mathcal{K})}$ exist as solutions of ${\BoxK h_{(0)}=\mathcal{J}_{-}}$ and ${(\BoxK+\mathcal{K}) h_{(\mathcal{K})}=\mathcal{J}_{-}}$, respectively. The former can be written using the odd-degree Legendre polynomials $P_{\ell_k}$, ${\ell_k=2k+1}$, 
\begin{equation}
    h_{(0)}=\frac{J(u)}{2}\sum_{k=0}^{\infty}\frac{2\ell_k+1}{-\ell_k(\ell_k+1)} P_{\ell_k}(\cos \theta)=-J(u)\log\left(\tan\tfrac{\theta}{2}\right).
\end{equation}
The latter is only consistent with the source if ${\mathcal{K}\neq \ell_{k}(\ell_{k}+1)}$, ${k\in\mathbb{N}_0}$. For example, if ${\mathcal{K}=6}$, then
\begin{equation}
    h_{(6)}=\frac{J(u)}{2}\sum_{k=0}^{\infty}\frac{2\ell_k+1}{-\ell_k(\ell_k+1)+6} P_{\ell_k}(\cos \theta) = \frac{J(u)}{4}\left[-\left(3 \cos ^2\theta -1\right) \log \left(\cot \tfrac{\theta }{2}\right)+3 \cos \theta\right].
\end{equation}
If ${\mathcal{K}=\nu(\nu+1)}$ with ${\nu\neq\mathbb{Z}}$, then
\begin{equation}
\begin{aligned}
    h_{(\mathcal{K})} &=\frac{J(u)}{2}\sum_{k=0}^{\infty}\frac{2\ell_k+1}{\nu(\nu+1)-\ell_k(\ell_k+1)}\,P_{\ell_k}(\cos\theta)
    \\
    &=\frac{J(u)}{2}\left[-\big(Q_\nu(\cos\theta)- Q_\nu(-\cos\theta)\big)+\frac{\pi \cos(\pi\nu)}{2\sin(\pi\nu)}\big(P_\nu(\cos\theta)- P_\nu(-\cos\theta)\big)\right],
\end{aligned}
\end{equation}
where $P_{\nu}$ and $Q_{\nu}$ are Legendre functions of the first and second kind. Note that the solution \eqref{eq:GF} can be used as a Green’s function to construct inhomogeneous solutions for an odd-parity source by convolution. Naturally, ignoring the (global) distributional character of the source, other local singular solutions exist; for example $h_{(0)}$ and $h_{(6)}$ separately, as well as ${h_{(2)}=2\cos\theta\,\log\left(\cot\tfrac{\theta}{2}\right)-2}$, satisfy the equation \eqref{uu_wave_Gen} outside the singularities. In general relativity, the singular solutions such as ${h_{(0)}}$ are the only non-trivial waves on the Nariai spacetime. The singularities are typically interpreted as point sources or topological defects. In our case, the constants in the solution of \cite{Ortaggio02} become $u$-dependent. In fact, such a solution in different coordinates is known as the Khlebnikov-Ghanam-Thompson spacetime \cite{Khlebnikov86}, which was conjectured to be universal \cite{HerPraPra14} (see also \cite{HerPraPra17}).

\subsection{Six-derivative gravity}

The situation in six-derivative gravity is more complicated than in quadratic gravity. Depending on the background and the relations between the couplings of the model, the field equations can impose different constraints on $h(u,\theta,\phi)$. In what follows we give two examples.

\subsubsection*{Example 1}

Let us start with the backgrounds of Sec.~\ref{SSec6der-2} (i.e., Bertotti-Robinson, Nariai and their Bachian generalizations), for which $\H=h_2 r^2$ (with $h_2=-a_2 \rn^2$). In this case, only the $uu$-component is nontrivial, namely, 
\be
\label{uu-wave-SDG2}
\begin{aligned}
&
\big\lbrace
12 R_0^4 \eta _1 \BoxK^2 
+ R_0^2 \big[ 12  R_0^2 \alpha + 24 \eta _1 \left(h_2+3\right)-12 \eta _4 \left(h_2+1\right)-9\eta _5 \left(h_2-1\right)+2 \eta _6 \left(h_2+1\right)
\\
&
-24 \eta _7 \left(h_2+1\right)+12 \eta _8 \left(h_2+1\right)  \big] \BoxK 
 -6 R_0^4 \gamma - 8 R_0^2 [ \alpha  \left(2 h_2-1\right)+3 \beta  \left(h_2+1\right) ]
\\
& +24 \eta _1 \left[h_2 \left(h_2+4\right)+5\right]-72 \eta _3 \left(h_2+1\right)^2+6 \eta _4 \left(h_2-1\right) \left(h_2+3\right)+8 \eta _6 \left(h_2-1\right) h_2
\\
&
+24 \eta _7 \left(h_2^2-1\right)+8 \eta _8 \left(h_2+1\right) \left(h_2+4\right) 
\big\rbrace \BoxK h = 0 ,
\end{aligned}
\ee
which can be cast in the form
\be
(\BoxK +{\cal K}_1 ) ( \BoxK +{\cal K}_2 ) \BoxK h =0 \,,\ \ {\cal K}_1,{\cal K}_2=\mbox{const.}
\ee
Again, as follows from App.~\ref{sc:Pboxeq}, the general solution of this equation reads
\begin{equation}
    h=\frac{h_0^0(u)}{\sqrt{4\pi}}+\sum_{m=-{l_1}}^{m={l_1}}h_{l_1}^m(u) Y_{l_1}^m(\theta,\phi)+\sum_{m=-{l_2}}^{m={l_2}}h_{l_2}^m(u) Y_{l_2}^m(\theta,\phi),
\end{equation}
where $h_l^m$ are arbitrary complex functions chosen such that $h$ is real. The modes $h_{l_1}^m(u)$ and $h_{l_2}^m(u)$ are nonzero only if ${\mathcal{K}_1 = l_1 (l_1 + 1)}$ and ${\mathcal{K}_2 = l_2 (l_2 + 1)}$ for some ${l_1,l_2 \in \mathbb{N}}$, respectively; otherwise the sums are absent. If $l_1 = l_2$, the two sums coincide. Following the discussion after \eqref{eq:quadgrsolh} and \eqref{eq:oddsource}, it is then straightforward to construct specific examples of axially-symmetric (${m=0}$) regular and singular solutions.

Notice that, for some specific backgrounds, certain couplings do not affect the perturbation. For instance, if $h_2=-1$ (Bertotti-Robinson) then~\eqref{uu-wave-SDG2} does not depend on $\eta_{3,7,8}$, whereas for $h_2=1$ (Nariai) it does not depend on $\eta_5$. It is also worth mentioning that although such backgrounds are generally independent of $\eta_5$ (see the discussion involving~\eqref{6der-eta5}), this coupling can affect the perturbations.

\subsubsection*{Example 2}

As a second example, consider the models defined by~\eqref{condF3SDGb} and~\eqref{condF3SDG}, that admit the background solution~\eqref{Sol.3.6D}, i.e.,
\be
\H(r) = h_3 r^3 - \frac{r}{3 h_3} , \qquad R_0^2 = 2 \sqrt{\frac{2 \eta _1-3 \eta _4}{\gamma}}.
\ee
The non-trivial components of the field equations for the perturbation $h(u,\theta,\phi)$ yield
\begin{subequations}\label{systemGW6der}
\begin{eqnarray}
E_{uu}: \quad && 
3 R_0^4 \left(R_0^4 \gamma +12 \eta _4\right) \BoxK^3 h 
+ 6 R_0^2 \big[ 5 R_0^4 \gamma +4 \alpha  R_0^2+3 \left(8 \eta _4+\eta _5-8 \eta _7\right) +9 h_3 r \left(R_0^4 \gamma -\eta _5-8 \eta _7\right) \big] \BoxK^2 h 
\nonumber
\\
&&
+ \big[ 2 \left(27 R_0^4 \gamma +8 \alpha  R_0^2+108 \eta _4-144 \eta _7\right) +6 h_3 r \left(39 R_0^4 \gamma -16 \alpha  R_0^2+108 \eta _4-144 \eta _7\right) \big] \BoxK h 
\nonumber
\\
&&
+ 12 h_3 \partial_u  \big[ 3 \left(R_0^4 \gamma +12 \eta _4\right)  \BoxK h + \big(  4 \alpha -6 R_0^2 \gamma \big) h \big]
= 0,
\label{Euu6der1}
\\
E_{u\theta} : \quad  &&
9 R_0^2 \left(R_0^4 \gamma-\eta _5-8 \eta _7\right) \partial_\theta \BoxK h  + \left(39 R_0^4 \gamma -16 \alpha  R_0^2+108 \eta _4-144 \eta _7\right) \partial_\theta h = 0 , \label{Euth6der1}
\\
E_{u\phi} : \quad &&
9 R_0^2 \left(R_0^4 \gamma-\eta _5-8 \eta _7\right) \partial_\phi \BoxK h  + \left(39 R_0^4 \gamma -16 \alpha  R_0^2+108 \eta _4-144 \eta _7\right) \partial_\phi h = 0 .
\label{Euph6der1}
\end{eqnarray}
\end{subequations}
From \eqref{Euth6der1} and \eqref{Euph6der1}, we get
\be 
\pi_2 \BoxK h  + \pi_1 h = {\cal T}(u)\,,
\label{eq_uth_uph}
\ee
where ${\cal T}$ is an arbitrary function of $u$
and
\bea
\pi_1\rovno 39 R_0^4 \gamma -16 \alpha  R_0^2+108 \eta _4-144 \eta _7\,,\\
\pi_2 \rovno 9 R_0^2 \left(R_0^4 \gamma-\eta _5-8 \eta _7\right) \,.
\eea
Substituting this result back into~\eqref{Euu6der1} and taking into account that $\BoxK {\cal T}(u)=0$, the dependence on $r$ is canceled and that equation reduces to
\be
\omega_3 \BoxK^3 h 
+ \omega_2  \BoxK^2 h 
+\omega_1 \BoxK h 
+\partial_u (\omega_{1u}  \BoxK  +\omega_{0u})h=0\,.
\label{eq_uu_r0}
\ee
where we have defined
\bea
\omega_3\rovno 3 R_0^4 \left(R_0^4 \gamma +12 \eta _4\right)\,,\\
\omega_2 \rovno 6 R_0^2 \big[ 5 R_0^4 \gamma +4 \alpha  R_0^2+3 \left(8 \eta _4+\eta _5-8 \eta _7\right)  \big] \,,\\
\omega_1\rovno   2 \left(27 R_0^4 \gamma +8 \alpha  R_0^2+108 \eta _4-144 \eta _7\right)   \,,\\
\omega_{1u} \rovno 36 h_3  \left(R_0^4 \gamma +12 \eta _4\right)  \,,\\
\omega_{0u} \rovno 12 h_3 \big(  4 \alpha -6 R_0^2 \gamma \big) \,.
\eea
Note that, differently from the cases of quadratic gravity and the previous example [see Eq.~\eqref{uu-wave-SDG2}], this equation might contain terms $\partial_u h$. Only for the even more particular models with $\omega_{1u}=0=\omega_{0u}$ (which also yields $\omega_3=0$), it happens that the equation for $h$ only involves the Laplace operator, as~\eqref{eq_uu_r0} simplifies to ($\omega_2  \BoxK  +\omega_1 )\BoxK h = 0$. 

If $\pi_2 \neq  0$ we can substitute
\be
 \BoxK h = \frac{- \pi_1 h + {\cal T}(u)}{\pi_2 }\label{boxh}\,
\ee
into \eqref{eq_uu_r0} to obtain
\be
(k_1
+  k_2\partial_u )h
=-( k_3  
+k_4   \partial_u ){\cal T}
\,,\label{druh}
\ee
where
\bea 
k_1\rovno \pi_1(-   \pi_1^2  \omega_3
+  \pi_1 \pi_2 \omega_2  
- \pi_2^2 \omega_1) \,,\\
k_2\rovno  \pi_2^2  (-  \pi_1\omega_{1u}   + \pi_2\omega_{0u} ) \,,\\
 k_3\rovno \pi_1^2\omega_3
-\pi_1 \pi_2\omega_2 
+   \pi_2^2 \omega_1 =-\frac{k_1}{\pi_1}  \,,\\
k_4\rovno  \pi_2^2    \omega_{1u}\,.
\eea 
Moreover, if $k_1\neq  0 \neq  k_2$ then \eqref{druh}
implies  that $h$ is of the form
\be
\label{form35}
h=e^{-\frac{k_1 u}{k_2} }\left[ f(\theta,\phi)-
\frac{1}{k_2}
\int^u e^{\frac{k_1 s}{k_2} }[k_3 {\cal T}(s)+k_4 {\cal T}'(s)] \mbox{d} s\right]\,.
\ee
Substituting it in \eqref{boxh}, it follows
\bea
\pi_2\BoxK f+\pi_1 f \rovno K\,,\\
e^{\frac{k_1 }{k_2}u } {\cal T}(u)+ 
\frac{\pi_1}{k_2}
\int^u e^{\frac{k_1 }{k_2} s}[k_3 {\cal T}(s)+k_4 {\cal T}'(s)] \mbox{d} s\rovno K\,,\label{eqT}
\eea
where $K$ is a constant. Differentiation of \eqref{eqT}
leads to
\be 
\omega_{0u} \pi_2^3 {\cal T}'(u)=0.
\ee 
Therefore, since we assumed $\pi_2\neq 0$, there are two possibilities:
\begin{itemize}
\item
If $\omega_{0u}=0$,
\be
\label{gw6derex2a}
h=e^{-\frac{k_1 }{k_2}u } f(\theta,\phi)+\frac{{\cal T}(u)}{\pi_1}.
\ee 
Indeed, recall that $\BoxK \mathcal{T}(u) = 0$ for any function $\mathcal{T}(u)$ and notice that in this case all the terms in~\eqref{eq_uu_r0} involve $\BoxK h$ or its higher-order derivatives, which leaves the dependence on $u$ undetermined.
\item If ${\cal T}=T_0$ is a constant,
\be
\label{gw6derex2b}
h=e^{-\frac{k_1 }{k_2}u } f(\theta,\phi)+\frac{T_0}{\pi_1}.
\ee 
\end{itemize}
Both cases correspond to $K=0$,\footnote{In general, $K$ is related to the   integration constant in~\eqref{form35} and can be absorbed into the definition of $f$, leading to solutions~\eqref{gw6derex2a} and~\eqref{gw6derex2b}.} so that $f$ satisfies
\be 
\pi_2\BoxK f+\pi_1 f =0\,.
\ee 
A regular solution to this equation only exists if ${\pi_1/\pi_2=l_0(l_0+1)}$ for some ${l_0\in\mathbb{N}}$ (see App.~\ref{sc:Pboxeq}) and takes the form
\begin{equation}
    f=\sum_{m=-{l_0}}^{m={l_0}}f_{l_0}^m Y_{l_0}^m(\theta,\phi),
\end{equation}
where $f_{l_0}^m$ are arbitrary complex constants such that $f$ is real.

There are other possibilities if $\pi_2$, $k_1$, or $k_2$ vanishes, but it is beyond the scope of this paper to scrutinize all the possible cases. However, to close with a few more examples, if $k_1=k_2=0$ and
assuming $\partial_u h\neq 0$, it is straightforward to identify three simple cases:
\begin{itemize}
\item If $\pi_1=\pi_2=0$ then~\eqref{eq_uth_uph} yields $\mathcal{T}=0$, and the perturbation is fully determined by Eq.~\eqref{eq_uu_r0}. 
Note that this equation might involve the terms $\partial_u h$ and $\partial_u \BoxK h$. Assuming ${\omega_3\neq0}$ and ${\omega_{1u}\neq0}$, we can rewrite \eqref{eq_uu_r0} to the form
\begin{equation}
    \mathcal{C}(\BoxK+\mathcal{K}_1)(\BoxK+\mathcal{K}_2)\BoxK h=\partial_u (\BoxK+\mathcal{L})h.
\end{equation}
The general regular solution can be written in terms of spherical harmonics, see App.~\ref{sc:Pboxeq}. It takes the form
\begin{equation}
    h=\sum_{m=-l_0}^{l_0} h_{l_0}^m(u) Y_{l_0}^m(\theta, \phi) + \sum_{l=0, l\neq l_0}^{\infty} \sum_{m=-l}^l  c_l^m e^{\lambda_l u} Y_l^m(\theta, \phi), \quad     \lambda_l=\frac{\mathcal{C}(x+\mathcal{K}_1)(x+\mathcal{K}_2)x}{x+\mathcal{L}}\bigg|_{x=-l(l+1)},
\end{equation}
where $h_{l_0}^m$ are arbitrary complex functions and $c_{l}^m$ arbitrary complex constants such that $h$ is real, and ${l_0\in\mathbb{N}_0}$ is such that ${\mathcal{L}=l_0(l_0+1)}$ 
if
${\mathcal{L}}$ equals either $0$, $\mathcal{K}_1$, or $\mathcal{K}_2$. If no such $l_0$ exists, the first sum is absent, and the corresponding modes are included in the second sum.

\item If $\pi_1=0$ and $\omega_{0u} = 0$, then~\eqref{druh} yields $\mathcal{T}(u)= K e^{-\frac{k_3 u}{k_4}}$ for some constant $K$. However, the only option compatible with Eq.~\eqref{eq_uth_uph} is $K=0$, which makes the perturbation governed by $\BoxK h = 0$.
\item If $\omega_{0u} = 0$, then any function $h = h(u)$ solves~\eqref{systemGW6der} {(which can be transformed away, see footnote \ref{footnotehu})}.
\end{itemize}

(In consonance with our treatment of the solutions in six-derivative gravity, our goal here is mainly to show that such models can have solutions that are quite different from those of quadratic gravity. 
For the more general backgrounds discussed in Secs.~\ref{SSec.Case4.6D} and~\ref{SSec.Case5.6D}, even more components of the field equations are non-trivial and the analysis of the possible branches of models admitting such gravitational wave solutions becomes more complicated.)

\section{Summary}\label{sc:summary} 

In this paper, we have investigated static, spherically symmetric Kundt spacetimes as vacuum solutions of quadratic gravity and six-derivative gravity. In addition, we have also studied gravitational waves propagating on these vacuum backgrounds.

As shown in Appendix~\ref{AppC}, purely geometric constraints on spacetimes supporting a Kundt congruence—without employing the field equations of any particular theory—imply that the only Kundt spacetimes compatible with the standard static spherically symmetric metric \eqref{SSexpandingmetric} are (A)dS and Minkowski. Therefore, we employ a metric ansatz \eqref{SSKundtmetric} suitable for spherically symmetric Kundt spacetimes. Note however, that this ansatz does not include (A)dS and Minkowski geometries.

In generic quadratic gravity (coupling constants obeying ${\alpha\neq3\beta}$), we have derived closed-form expressions for all vacuum static, spherically symmetric Kundt solutions. These solutions include static, spacetimes that are well known in the context of general relativity, where they appear as vacuum or electrovacuum solutions: specifically, the Nariai (vacuum), Bertotti-Robinson (electrovacuum), and Pleba\'nski-Hacyan (electrovacuum) geometries. In addition, this class also contains generalizations of these spacetimes with a non-trivial Bach tensor - Bachian-Nariai and Bachian-Bertotti-Robinson spacetimes. All of the above solutions have a constant Ricci scalar, and they have already been briefly discussed in \cite{PraPraPodSva21}, where spherically symmetric solutions to quadratic gravity with constant $R$ were studied, primarily with a focus on black hole solutions. Furthermore, the ${\alpha\neq3\beta}$ class contains a closed-form solution with a non-constant $R$ and a curvature singularity. Conversely, within the  ${\alpha= 3\beta}$ class, we have found only particular closed-form solutions, and we have studied more general solutions using Frobenius analysis.

Six-derivative gravity admits eight additional coupling constants. Consequently, the number of possible special subcases becomes too large to allow for a systematic study of all solutions. Instead, we focus on selected models to illustrate the variety of closed-form solutions that arise. We also determine the possible indicial families of Frobenius solutions and point out that closed-form solutions exist for each indicial family. Finally, for all closed-form solutions, we analyze the corresponding curvature singularities and their accessibility to geodesic observers.

Having established the static, spherically symmetric Kundt backgrounds in these theories, we then analyzed gravitational waves of the form \eqref{QGwaves} propagating through them. In general relativity, radiative solutions on some of these backgrounds were studied in \cite{Ortaggio02,OrtPod02} for vacuum and electrovacuum. Importantly, the field equations reduce to a single generalized wave equation only for a subset of the backgrounds obtained above. For quadratic gravity, this occurs only when the background metric function $f(r)$ is quadratic in $r$, whereas six-derivative gravity allows for more general backgrounds. For backgrounds compatible with these gravitational waves, we derived generalized wave equations of the form
\be
    (\BoxK +{\cal K}_1 )  \BoxK h =0 \,, \ \ \ (\BoxK +{\cal K}_1 ) ( \BoxK +{\cal K}_2 ) \BoxK h =0 \,,\ \ {\cal K}_1,{\cal K}_2=\mbox{const}
\ee
for quadratic gravity and six-derivative gravity, respectively. In contrast to general relativity, where all non-trivial solutions are necessarily singular, with singularities interpreted as point sources or topological defects, quadratic and six-derivative gravity admit non-trivial regular solutions corresponding to massive gravitational waves; however, singular solutions sourced by odd-parity $\delta$-distributions can still be constructed.

\section*{Acknowledgments}

B.L.G. and I.K. acknowledge financial support by Primus grant PRIMUS/23/SCI/005 from Charles University and the support from the Charles University Research Center Grant No. UNCE24/SCI/016.

V.P. and A.P. were supported by the Institute of Mathematics of the Czech Academy of Sciences (RVO 67985840) and by research grant GA\v CR 25-15544S. V.P. is also grateful to M. Ortaggio for a helpful discussion regarding his works \cite{Ortaggio02,OrtPod02}. 


\appendix

\section{Polynomial equations with Laplace operator on 2-sphere}\label{sc:Pboxeq}

In this appendix, we construct general solutions of the polynomial Laplace, Poission, and heat equations that were derived in Sec.~\ref{sc:gravwaves}. First, we consider the polynomial heat equation of the form
\begin{equation}\label{eq:pboxhQ}
    P(\BoxK)h= \partial_u Q(\BoxK)h,
\end{equation}
where $P$ and $Q$ are two polynomials. Its special case, ${Q=0}$, is the polynomial Laplace equation,
\begin{equation}\label{eq:pboxhzero}
    P(\BoxK)h=0,
\end{equation}
The solution of these equations can be written in terms of the spherical harmonics ${Y_l^m}$ (see, e.g. \cite{Kolar:2021rfl}), which form a complete set of eigenfunctions of the Laplace operator on a 2-sphere,
\begin{equation}
    \BoxK Y_l^m(\theta, \phi)=- l(l+1) Y_l^m(\theta, \phi).
\end{equation}
This implies that any regular function ${h=h(u,\theta,\phi)}$ can be expanded as follows
\begin{equation}\label{eq:hsumY}
    h(u, \theta, \phi)=\sum_{l=0}^{\infty} \sum_{m=-l}^l h_l^m(u) Y_l^m(\theta, \phi),
\end{equation}
with the inverse being
\begin{equation}
    h_l^m(u) = \int_0^{\pi} d\theta' \int_0^{2\pi} d\phi' \, \sin\theta' \, h(u, \theta', \phi') \, \bar{Y}_l^m(\theta', \phi').
\end{equation}
Here, we used the normalization of $Y_l^m$,
\begin{equation}
    Y_l^m(\theta, \phi)=\sqrt{\frac{2 l+1}{4 \pi} \frac{(l-m)!}{(l+m)!}} P_l^m(\cos \theta) e^{i m \phi},
\end{equation}
which satisfies the orthonormality relation
\begin{equation}
    \int_0^\pi d \theta^{\prime} \int_0^{2 \pi} d \phi^{\prime} \sin \theta^{\prime} Y_l^m\left(\theta^{\prime}, \phi^{\prime}\right) \bar{Y}_{l^{\prime}}^{m^{\prime}}\left(\theta^{\prime}, \phi^{\prime}\right)=\delta_{l l^{\prime}} \delta_{m m^{\prime}},
\end{equation}
where $P_l^m$ are the associated Legendre functions. Inserting this into \eqref{eq:pboxhQ}, we find
\begin{equation}
    \sum_{l=0}^{\infty} \sum_{m=-l}^l \left[ P(-l(l+1))h_l^m(u)-Q(-l(l+1))\frac{d}{du}h_l^m(u) \right] Y_l^m(\theta, \phi)=0.
\end{equation}
Since $Y_l^m$ form an orthonormal basis, this immediately implies
\begin{equation}
    P(-l(l+1))h_l^m(u)-Q(-l(l+1))\frac{d}{du}h_l^m(u)=0,  \quad\forall l,m.
\end{equation}
Assuming ${Q(-l(l+1))\neq 0}$, this equation is solved by
\begin{equation}
    h_l^m(u) = c_l^m e^{\lambda_l u},
\end{equation}
where $c_l^m$ are arbitrary integration constants and we introduced
\begin{equation}
    \lambda_l:=\frac{P(-l(l+1))}{Q(-l(l+1))}.
\end{equation}

Let us denote a set of $l$ corresponding to the roots of $P$ and $Q$ that coincide with the eigenvalues of $\BoxK$ by
\begin{equation}
    \Xi:=\left\{l \in \mathbb{N}_0 \mid P(-l(l+1))=Q(-l(l+1))=0\right\}
\end{equation}
and a set of $l$ with non-vanishing values of $Q$ by
\begin{equation}
    \Sigma := \left\{l \in \mathbb{N}_0 \mid Q(-l(l+1))\neq 0\right\}.
\end{equation}
The general solution of \eqref{eq:pboxhQ} is then given by
\begin{equation}
    h_{\textrm{heat}}(u, \theta, \phi)=\sum_{l\in\Xi} \sum_{m=-l}^l h_l^m(u) Y_l^m(\theta, \phi) + \sum_{l\in\Sigma} \sum_{m=-l}^l  c_l^m e^{\lambda_l u} Y_l^m(\theta, \phi),
\end{equation}
where $h_l^m(u)$ are arbitrary functions of $u$. If ${Q=0}$, then ${\Sigma=\emptyset}$,
so the general solution of \eqref{eq:pboxhzero} reduces to
\begin{equation}
    h_{\textrm{Laplace}}(u, \theta, \phi)=\sum_{l\in\Xi} \sum_{m=-l}^l h_l^m(u) Y_l^m(\theta, \phi).
\end{equation}

Consider now the polynomial Poisson equation,
\begin{equation}\label{eq:Pheqc}
   P(\BoxK)h=C,W
\end{equation}
where ${C=C(u)\neq0}$ is a non-zero function. It corresponds to the ${l=0}$ mode, i.e., ${C=\sqrt{4\pi}CY_0^0 }$. Inserting expansions into \eqref{eq:Pheqc}, we get 
\begin{equation}
    \sum_{l=0}^{\infty} \sum_{m=-l}^l \left[ P(-l(l+1))h_l^m(u)-\delta^0_l\delta_0^m \sqrt{4\pi}C(u) \right] Y_l^m(\theta, \phi)=0.
\end{equation}
Again, the orthonormality of $Y_l^m$ implies
\begin{equation}
     P(-l(l+1))h_l^m(u)-\delta^0_l\delta_0^m \sqrt{4\pi}C(u)=0, \quad \forall l,m.
\end{equation}
Clearly, the solution does not exist for ${P(0)=0}$. If ${P(0)\neq0}$, then
\begin{equation}
    h_{\textrm{Poisson}}(u, \theta, \phi)=\frac{\sqrt{4\pi}C(u)}{P(0)}+\sum_{l\in\Xi} \sum_{m=-l}^l h_l^m(u) Y_l^m(\theta, \phi).
\end{equation}
Notice that the solutions are only sensitive to the structure of the sets $\Sigma$ and $\Xi$. When applied to the equations in the main text, this can be translated to the specific conditions on the coupling constants.

To this point, we discussed (globally) regular solutions. As the space of singular solutions is large and convoluted, we restrict ourselves to analyzing only singular solutions corresponding to distributional $\delta$-sources; hence, we avoid higher-order distributional sources and singularities that fail to define distributions. Due to the presence of inexcludable zero modes of the Laplacian on the compact boundaryless 2-sphere, not every source $J$ admits a solution of ${P(\BoxK)h=J}$ for a given polynomial $P$. Two common ways to define Green's function is thus assuming either the antipodal ${(\delta\pm\delta)}$-source (see, e.g., \cite{Kolar:2021rfl}) or the ${(\delta-1/{\text{vol}})}$-source (see, e.g., \cite{Kolar:2022kgx}). We focus on the former and consider the equation for antipodally-symmetric Green's function
\begin{equation}
   P(\BoxK)h=J(u)\big[\delta(\cos\theta-1)+\sigma\delta(\cos\theta+1)\big], \quad \sigma=\pm1.
\end{equation}
Using above formulas together with the standard representation of the ${(\delta\pm\delta)}$-distribution,
\begin{equation}
    \delta(\cos\theta-1)+\sigma\delta(\cos\theta+1)=\sum_{k=0}^{\infty} \sqrt{(2\ell_k+1)\pi}\, Y_{2k}^0(\chi,\varphi), \quad
\ell_k=\begin{cases}
    2k, &\sigma=+1,
    \\
    2k+1, &\sigma=-1,
\end{cases}
\end{equation}
one finds that the non-zero modes are
\begin{equation}
    h^0_{\ell_k}(u)=\frac{J(u)\sqrt{(2\ell_k+1)\pi}}{P(-\ell_k(\ell_k+1))},
\end{equation}
under the assumption that ${P(-\ell_k(\ell_k+1))\neq0}$ for all ${k\in\mathbb{N}_0}$; otherwise, the operator has a kernel and the source must be modified. If we assume the roots to be simple (can be generalized to higher multiplicity) and denote them by ${-\mu_i}$, i.e., ${P(-\mu_i)=0}$, ${i=1,\dots,N}$, then the antipodally-symmetric Green's function reads
\begin{equation}\label{eq:GFsol}
\begin{aligned}
    h_{\text{GF}}(u,\theta,\phi) &=\sum_{k=0}^{\infty}h^0_{\ell_k}(u)Y_{\ell_k}^0(\theta, \phi)=\frac{J(u)}{2}\sum_{k=0}^{\infty}\frac{2\ell_k+1}{P(-\ell_k(\ell_k+1))} P_{\ell_k}(\cos \theta)
    \\
    &=\frac{J(u)}{2}\sum_{i=1}^{N}\frac{1}{P'(-\mu_i)}\sum_{k=0}^{\infty}\frac{2\ell_k+1}{-\ell_k(\ell_k+1)+\mu_i} P_{\ell_k}(\cos \theta)
    \\
    &=\frac{ J(u)}{2}\sum_{i=1}^{N}\frac{-\big(Q_{\nu_i}(\cos\theta)+\sigma Q_{\nu_i}(-\cos\theta)\big)+\frac{\pi \cos(\pi\nu_i)}{2\sin(\pi{\nu_i})}\big(P_{\nu_i}(\cos\theta)+\sigma P_{\nu_i}(-\cos\theta)\big)}{P'(-\mu_i)},
\end{aligned}
\end{equation}
where we introduced ${\nu_i(\nu_i+1)=\mu_i}$, used the fraction decomposition
\begin{equation}
    \frac{1}{P(x)}=\sum_{i=1}^{N}\frac{1}{P'(-\mu_i)}\frac{1}{x+\mu_i},
\end{equation}
and the Legendre function summation identity 
\begin{equation}
    \sum_{l=0}^{\infty}\frac{2l+1}{\nu(\nu+1)-l(l+1)}\,P_l(x)=-2\,Q_\nu(x)+\frac{\pi \cos(\pi\nu)}{\sin(\pi\nu)}\,P_\nu(x), \qquad |x|<1,
\end{equation}
which can be restricted to even/odd terms as
\begin{equation}
    \sum_{k=0}^{\infty}\frac{2\ell_k+1}{\nu(\nu+1)-\ell_k(\ell_k+1)}\,P_{\ell_k}(x)=-\big(Q_\nu(x)+\sigma Q_\nu(-x)\big)+\frac{\pi \cos(\pi\nu)}{2\sin(\pi\nu)}\big(P_\nu(x)+\sigma P_\nu(-x)\big), \quad |x|<1, \quad \ell_k=\begin{cases}
    2k,\\
    2k+1.
\end{cases}
\end{equation}
The last line in \eqref{eq:GFsol} holds only if ${\nu_i\not\in\mathbb{Z}}$. Here are a few examples of equations together with their solutions (omitting the normalization):
\begin{equation}
\begin{aligned}
    P =\BoxK, \sigma=-1:  & \qquad\sum_{k=0}^{\infty}\frac{2\ell_k+1}{-\,\ell_k(\ell_k+1)}\,P_{\ell_k}(\cos\theta)=-2\log\left(\tan\tfrac{\theta}{2}\right), \qquad \ell_k = 2k+1 ,
    \\
    P =\BoxK+2, \sigma=+1: &\qquad \sum_{k=0}^{\infty}\frac{2\ell_k+1}{-\ell_k(\ell_k+1)+2}P_{\ell_k}(\cos\theta)=2\cos\theta\,\log\left(\cot\tfrac{\theta}{2}\right)-2,\qquad \ell_k=2k,
    \\
    P =\BoxK+6, \sigma=-1: &\qquad \sum_{k=0}^{\infty}\frac{2\ell_k+1}{-\ell_k(\ell_k+1)+6}P_{\ell_k}(\cos\theta)=-\tfrac12\left(3 \cos ^2\theta -1\right) \log \left(\cot \tfrac{\theta }{2}\right)+\tfrac32 \cos \theta,\qquad \ell_k=2k+1.
\end{aligned}
\end{equation}
Notice that the roots $-\mu$ and signs $\sigma$ are chosen so that ${P(-\ell_k(\ell_k+1))\neq0}$.

\section{Connection between exact and power-series solutions of six-derivative gravity}
\label{sec_con}

In general, the power series solutions discussed in Sec. \ref{sec_six_power} admit more free parameters than the exact solutions listed in Sections \ref{sec_lin}--\ref{SSec.Case4.6D}.
However, examples of these closed-form solutions cover all possible classes 1a), \dots 3c) introduced in Sec. \ref{sec_six_power}.
Let us now discuss these exact solutions in the context of these classes. The conclusions are summarized in Table~\ref{table6der}.

\subsubsection{Solutions  $f=a_0 + a_1 r $ with $a_1 \neq  0$ in Sec.  \ref{sec_lin}}

The solution discussed in Sec. \ref{sec_lin} corresponds to the classes 3a) or 3b) (respectively, if $n=1$ or $n=0$). Moreover, the requirement that
only the coefficients $a_0, a_1$ are non-vanishing  leads
to $q_0=0=\sigma_0$. Different combinations of these two conditions give
\bea
3\rn^4\gamma-4\rn^2(\alpha-3\beta)+w\rovno 0\,,\\
9\rn^4\gamma (2\Lambda \rn^2-1)+w\rovno 0\,,
\eea
where
\be
w=36 \eta _3+9 \eta _4+3 \eta _6+12 \eta _7+2 \eta _8 \,,
\ee
which are equivalent to~\eqref{CaseF2six}.

\subsubsection{Solutions $f=a_0 + a_1 r + a_2 r^2 $ with $a_2 \neq  0$  in Sec.  \ref{SSec6der-2}}

The solution discussed in Sec. \ref{SSec6der-2} can be contained in the classes 2), 3a) or 3b) (respectively, for $n=2$, $n=1$, or $n=0$), depending on whether $a_0$ and $a_1$ vanish. The requirement that the solution is a second-order polynomial 
leads to
\be
\label{exIVC}
\begin{split}
8q_2 a_2^3+4q_1 a_2^2+q_0 =  0\,,\\
8\sigma_4 a_2^3+4\sigma_2 a_2^2+2\sigma_1 a_2+\sigma_0 = 0\,,
\end{split}
\ee
which are equivalent to \eqref{6der-2-sys}. 
Their combination gives
\be
4a_2^2(\sigma_2-q_1)+2a_2\sigma_1
+\sigma_0-q_0=0\,.
\ee

Also, note that the conditions~\eqref{exIVC} match~\eqref{cond2IIii} (upon redefining $a_2 \rightarrow a_0$), which are obtained in a different context of $n=2$ solutions.

\subsubsection{Solutions $f=a_0 + a_1 r + a_2 r^2 + a_3 r^3$ with $a_3 \neq  0$ in Sec.  \ref{SSec.Case3.6D}}

The solutions discussed in Sec. \ref{SSec.Case3.6D} may correspond to the classes 1a), 2), 3a) or 3b), as $n\in\lbrace0,1,2,3\rbrace$. In Sec. \ref{SSec.Case3.6D}, $f(r)$ is cubic in $r$. The truncation of the power series after the cubic term  
requires
 \be
2(\eta_1-3\eta_2)+w=0
\ee
which is the condition
\eqref{condF3SDGb} and can be also expressed as 
$ q_3=3q_2$.
Higher orders of the field equations give relations between $a_0,a_1,a_2,a_3$
\bea
q_0+4q_1 (a_2^2-3a_1 a_3)+4q_2 a_2(2 a_2^2-9 a_1 a_3)+36q_3 a_0 a_3^2\rovno 0\,,
\nonumber\\
\sigma_0+2 \sigma_1 a_2+4\sigma_2 a_2^2
+6\sigma_3 a_1 a_3+8\sigma_4 a_2^3
+36  a_0 a_3^2 (\sigma_5+\sigma_6)+12 a_1 a_2 a_3(\sigma_6+ \sigma_7)\rovno 0\,,\nonumber\\
a_3 [\sigma_1 +2  a_2 (2\sigma_2
+\sigma_3 )
+4  a_2^2 (3 \sigma_4+\sigma_6 +\sigma_7)
+6 a_1 a_3 ( \sigma_5+2\sigma_6
+\sigma_7 )]
\rovno 0\,.\nonumber
\eea

A particular solution of the above system is the solution 
\eqref{Sol.3.6D}, \eqref{condF3SDG}.

As another, more simple, example of model with a solution in the class 1a) with $n=3$, $q_0 = \sigma_0 = \sigma_1 = 0$ but $q_1\neq0$, we mention the solution
\begin{equation}
  f(r) = a_3 r^3-\frac{45 a_3^2 r^4}{64 \Lambda } , \quad \rn^2 = \frac{1}{\Lambda },
\end{equation}
in the model with
\begin{equation}
\label{Exn=3}
\begin{gathered}
    \alpha = 3 \beta -\frac{3 \gamma }{2 \Lambda } , \quad
    \eta_1= -15 \eta_2 = \frac{45 \gamma }{32 \Lambda ^2},  
    \quad\eta_4= \frac{48 \beta  \Lambda -13 \gamma }{48 \Lambda ^2}, 
    \\
    \quad\eta_6= -\frac{48 \beta  \Lambda +35 \gamma }{16 \Lambda ^2} , 
    \quad \eta_7= \frac{39 \gamma -48 \Lambda  (\beta +12 \eta_3 \Lambda )}{64 \Lambda ^2} ,
    \quad \eta_8= \frac{9 (16 \Lambda  (\beta +8 \eta_3 \Lambda )-13 \gamma )}{32 \Lambda ^2},
\end{gathered}
\end{equation}
for which
$q_1=-\frac{18\gamma}{\Lambda^4}$.

\subsubsection{Solutions  $f = a_2 r^2 + a_p r^p$ with $a_p \neq  0$ in Sec.  \ref{SSec.Case4.6D}}

Finally, the solutions discussed in Sec. \ref{SSec.Case4.6D} provide examples of the classes 1b) with $n>2$ provided $a_2=0$ and  $p>2$ or 2) with $n=2$ if $p>2$  and $a_2\neq  0$ or 3c)  with $n<2$ for  $p<2$. 

For $p<2$ this solution belongs to class 3c) $(n<2)$. For $p>2$ it belongs to class 1b) $(n>2)$ provided $a_2=0$.

The solution~\eqref{SolEx4N} and the related model~\eqref{Exa2+an} are such that $q_0=q_1=q_2=q_3=\sigma_0=\sigma_1=\sigma_2=\sigma_4=0$, which is enough to satisfy the conditions~\eqref{cond2IIii} of a solution in the class 2).

\section{Triviality of the standard static, spherically symmetric Kundt branch}

\label{AppC}

Let us now identify all Kundt spacetimes compatible with the
standard static, spherically symmetric metric
\be
	\dd s^2=-f(r) \dd t^2+\frac{1}{f(r)} \dd  r^2+ r^2
    ( \dd \theta^2+\sin^2\theta\dd\phi^2). \label{SSexpandingmetricApp}
	\ee
This is a Petrov type D or O metric.
For Petrov type D, the two multiple null principal directions are
\be
\ell^\pm = \frac{1}{f} \partial_t \pm\partial_r
\ee
and their expansion reads $ \frac{1}{2} \ell^a_{\ ;a}=\pm \frac{1}{r}$, which is strictly nonzero.

Let us assume that a spacetime described by \eqref{SSexpandingmetricApp} admits a Kundt congruence $\bl$. Such a null geodesic congruence is non-expanding, twist-free, and shear-free. In the Newman-Penrose (NP) formalism, this implies 
\be
\kappa = \sigma = \rho = 0.
\ee
This, using the NP equation (7.21b) of \cite{Stephanibook}, implies that the highest boost-weight component of the Weyl tensor vanishes,
\be
\Psi_0 = 0,
\ee
and thus the Kundt congruence is necessarily a principal null direction (PND) of \eqref{SSexpandingmetricApp}. If the metric \eqref{SSexpandingmetricApp} is of type D, it already possesses two expanding multiple PNDs and cannot admit another one. Thus, a Kundt spacetime compatible with \eqref{SSexpandingmetricApp} is necessarily of Petrov type O and all components of the Weyl tensor vanish.

Furthermore, the NP equations (7.21a) and (7.21k) of \cite{Stephanibook} imply 
\be
\Phi_{00} = \Phi_{01} = 0.
\ee
From the above equation, it follows that $\bl$ is an eigenvector of the Ricci tensor
\be
R^a_{\ b} \ell^b = \lambda \ell^a. \label{eqeigenR}
\ee
At the same time, for the metric \eqref{SSexpandingmetricApp}, $R^a_{\ b}$ is diagonal. 

Since $\bl$ is null, its time component $\ell^t$ is non-zero and $\lambda$ is fixed by
\be
\lambda = R^t_{\ t}=R^r_{\ r}.
\ee
The expanding PNDs of \eqref{SSexpandingmetricApp} have only $\partial_t$ and $\partial_r$ nonzero components. The Kundt direction $\bl$ cannot be proportional to these PNDs (since it is non-expanding) and thus either $\ell^\theta$ or $\ell^\phi$ is non-vanishing. From \eqref{eqeigenR} it then follows that
\be
\lambda = R^\phi_{\ \phi}=R^\theta_{\ \theta}.
\ee
Therefore, 
\be
R^a_{\ b}=\lambda g^a_{\ b}.
\ee
The contracted Bianchi identities then imply $\lambda =$const. and thus a Kundt spacetime compatible with \eqref{SSexpandingmetricApp} is necessarily a maximally symmetric space and therefore (A)dS or Minkowski.

\bibliographystyle{JHEP}

\bibliography{bibl,biblV2025}

\end{document}